\newtheorem{theorem}{Theorem}
\newtheorem{example}[theorem]{Example}
\newtheorem{lemma}[theorem]{Lemma}
\algnewcommand\AND{\textbf{and}\xspace}
\algnewcommand\OR{\textbf{or}\xspace}
\title{\bf Independent Spanning Trees in Eisenstein-Jacobi Networks}           %%%% Replace with your title.
\author{
{\bfseries Z. Hussain$^1$, H. AboElFotoh$^1$, and B. AlBdaiwi$^1$}\\
$^1$Computer Science Department, Kuwait University, Kuwait\\
}
\begin{document}

\maketitle                        %%%% To set Title and Author names.

\begin{abstract}
Spanning trees are widely used in networks for broadcasting, fault-tolerance, and securely delivering messages.
Hexagonal interconnection networks have a number of real life applications. Examples are cellular networks, computer graphics, and image processing. Eisenstein-Jacobi (EJ) networks are a generalization of hexagonal mesh topology. They have a wide range of potential applications, and thus they have received researchers’ attention in different areas among which interconnection networks and coding theory.
In this paper, we present two spanning trees’ constructions for Eisenstein-Jacobi (EJ). The first constructs three edge-disjoint node-independent spanning trees, while the second constructs six node-independent spanning trees but not edge disjoint. Based on the constructed trees, we develop routing algorithms that can securely deliver a message and tolerate a number of faults in point-to-point or in broadcast communications. The proposed work is also applied on higher dimensional EJ networks.
\end{abstract}

\vspace{1em}
\noindent\textbf{Keywords:}
 {\small  Interconnection network, hexagonal network, Eisenstein-Jacobi, spanning tree, edge disjoint, fault-tolerant, routing, broadcasting.} %%%% Replace with your keywords

\section{Introduction}\label{sec:introduction}
The characteristics and properties of an interconnection network play a major role in the performance of the network since they determine the fault tolerance capabilities. Over past decades, many types of interconnection networks have been discussed such as Hypercube \cite{hayes1989hypercube}, mesh \cite{kumar1994introduction}, Torus \cite{dally1986torus}, $k$-ary $n$-cube \cite{bose1995lee}, butterfly, and Gaussian \cite{5204080}. Some machines have been implemented based on the topologies of these interconnection networks such as the IBM BlueGene \cite{adiga2002overview}, the Cray T3D and T3E \cite{scott1996cray}, the HP GS1280 multiprocessor \cite{cvetanovic2003performance}, and the J-machine \cite{noakes1993j}. Hexagonal networks are another type of interconnection are used in cellular networks \cite{nocetti2002addressing}, computer graphics \cite{lester1984computer}, image processing \cite{rummelt2011array}, and HARTS project \cite{shin1991harts}.

Eisenstein-Jacobi networks (EJ) were proposed in \cite{Martinez:2008:MHC:1375406.1375412} and \cite{5204080}. They are generated based on EJ integers \cite{huber1994codes}. EJ networks are symmetric 6$-$regular networks and they are generalizations of the hexagonal mesh topology presented in \cite{46277}\cite{90246}. One of the advantages of these type of networks is that they are used as a new method for constructing some classes of perfect codes that are used to solve the problem of finding perfect dominating set \cite{Martinez:2008:MHC:1375406.1375412}\cite{huber1994codes}. In addition, there are some studies on the applications of EJ netowkrs such as routing, broadcasting, and Hamiltonian cycles \cite{5204080}\cite{hussain2015edge}. The detailed definition of EJ network is discussed in Section \ref{sec:background}.

Independent spanning trees are widely used to broadcast messages and to obtain routing paths between nodes in a network. Moreover, they are used in networks to offer a reliable communication \cite{itai1988multi}\cite{krishnamoorthy1987fault}. For example, given a regular network of degree $d$, we can tolerate a number of faulty nodes by constructing $d$ independent spanning trees so that the network will still be connected even with the existence of $d-1$ faulty nodes. In addition, independent spanning trees are used to securely deliver a message to the destination node \cite{rescigno2001vertex}\cite{yang2011broadcasting}. For instance, a message can be sliced into $d$ parts where each part travels in distinct path until all parts reach the destination node. A clear definition of independent spanning trees is described in Section \ref{sec:background}.

The three main contributions of this paper are as follows. First, we introduce a construction of six node-independent spanning trees (IST) in EJ networks. Second, we present 
a construction of three edge-disjoint node-independent spanning trees (EDNIST) in EJ networks. Note that both constructions can be also
applied in hexagonal networks. Third, we develop routing algorithms based on the constructed trees that can be used in fault-tolerant point-to-point routing, fault-tolerant broadcasting,
or in secure message distributions. The designed algorithms are unified in the sense that they can be initiated from any node in an EJ network due to
the network topology symmetry and node transitivity.

Throughout this paper, the terms vertices and nodes are used interchangeably. Similarly for edges and links; and, graph and network. The rest of this paper is organized as follow. In Section \ref{sec:background} we review some terminologies from graph theory and we briefly describe the EJ networks. Section \ref{sec:relatedworks} discusses some previous works related to the domain of this paper. We introduce the node-independent spanning trees and edge-disjoint node-independent spanning trees in EJ networks in sections \ref{sec:EDNIST} and \ref{sec:IST}, respectively. In Section \ref{sec:routing}, we present the routing algorithm. The simulation results are described in Section \ref{sec:experimentalResults}. In Section \ref{sec:STinHigherEJ}, we apply the proposed construction methods on higher EJ networks. Finally, the paper is concluded in Section \ref{sec:conclusion}.

\section{Background\label{sec:background}}
Based on graph theory, some definitions and properties of graph are reviewed in this section.
In addition, we briefly describe the topological properties of EJ networks.

Given a graph $G(V,E)$ such that $v$ is the set of $|V|$ vertices and $E$ is the set of $|E|$ edges. An edge is a direct connection between two vertices denoted as $(u,v)$, such that $u,v \in V$. A sequence of connected edges are called path. That is, a path $P(s,d)$  of length $|P(s,d)| = n$ from vertex $s$ to vertex $d$ in $G$ is a sequence of connected edges $(s,x_1), (x_1,x_2), \dots, (x_n,d)$ where the intermediate vertices are distinct. Two paths $P_1(u,v)$ and $P_2(u,v)$ are said to be independent if their intermediate vertices are mutually disjoint. A tree $ST(V',E')$ that is a subgragh of $G(V,E)$ where $V' \subseteq V$ and $E' \subseteq E$ is called spanning tree when it contains all the vertices of G, i.e., $V' = V$. Two or more spanning trees $ST_j$, for $j = 1, 2, \dots, n$, rooted at vertex $r$ are called independent spanning trees if $\bigcap_{j=1}^{n} (P_{ST_j}(r,u) \setminus \{r,u\}) = \phi$ for $u \in V$, where $P_{ST_j}(r,u)$ is a path from $r$ to $u$ in the $j^{th}$ spanning tree. Further, the trees which their edge sets are pairwise disjoint are called edge-disjoint node-independent spanning trees. That is, for all trees $ST_j(V,E_j')$, for $j = 1, 2, \dots, n$, we have $E_p' \cap E_q' = \phi$ for all $p \neq q$ such that $1 \leq p \leq n$ and $1 \leq q \leq n$. In a graph $G$, the distance (denoted as $D(u,v)$) between two vertices $u$ and $v$ is the number of edges along the shortest path $P(u,v)$ (the path with minimum length over all possible paths between $u$ to $v$). The diameter $k$ of the graph is known as the shortest distance between two most farthest vertices in graph $G$.

Eisenstein-Jacobi networks \cite{5204080} are based on EJ integers \cite{huber1994codes}\cite{Martinez:2008:MHC:1375406.1375412}, which can be modeled on planar graphs as a graph $EJ_\alpha(V,E)$ generated by $\alpha = a+b\rho$ such that $0 \leq a \leq b$, where $V = \mathbb{Z}[\rho]_\alpha$ is the vertex set modulo $\alpha$, which represents the nodes in the network; and $E = \{(A,B) \in V \times V : (A-B) \equiv \pm 1, \pm \rho, \pm \rho^2 \ mod \ \alpha \}$ is the edge set, which represents the network links. The set of \textit{Eisenstein-Jacobi integers} $\mathbb{Z}[\rho]$ is defined as:
\begin{equation*}
\mathbb{Z}[\rho] = \{x+y\rho \ | \ x,y \in \mathbb{Z} \}
\end{equation*}
where $\rho  = (1 + i\sqrt 3)/2$, and $i = \sqrt{-1}$.
It is known that $\mathbb{Z}[\rho]$ is a Euclidean domain and the norm of EJ
integer $\alpha = a + b\rho$ is given by $N(\alpha) = {a^2} + {b^2} + ab$ \cite{5204080}, which is the total number of the distinct vertices in the network under the residue class modulo $\alpha$. It can be seen that $\rho^2 = \rho - 1$, $\rho^3 = -1$, $\rho^4 = -\rho$, $\rho^5 = 1-\rho$, and $\rho^6 = 1$.

The EJ networks are regular symmetric networks of degree six since each node in EJ network has six neighbors. The nodes in the network are addressed by $x + y\rho$. Two nodes in the network are adjacent if and only if there is an edge between them, i.e., the distance between them is 1.

The distance distribution in the network is based on the distance of the nodes from the center node, usually node 0. That is, it is the number of nodes at distance $t$ from node 0 where $t > 0$. EJ networks are called dense EJ networks when they contain a maximum number of nodes at distance $k$ where $k$ is the diameter of the network. Usually, their generator is $\alpha = a + b\rho$ such that $b = a + 1$. Thus, the number of nodes at distance $t$ is 1 or $6t$, respectively, for $t = 0$ or $t > 0$. It can be concluded that the diameter of dense EJ networks is $k = a$ and the number of nodes $d(t)$ at distance $t$ is:
\begin{equation*}
d(t) = \left\{
{\begin{array}{*{20}{l}}
1&{if \ t = 0}\\
{6t}&{if \ 1 \le t \leq k}\\
\end{array}} \right.
\end{equation*}

\begin{example}
\label{ex_EJ34}
Fig. \ref{EJ34} illustrates the node distribution (white nodes) of EJ network generated by $\alpha = 3+4\rho$ where the center node is 0.
\end{example}

There are two types of links in the EJ networks. The links that reside within the network are called regular links, which connect two neighboring nodes either two of them are none boundary nodes or one of them is a boundary node and the other one is a none boundary node in the network. Whereas, the links that are not residing within the network are called wraparound links, which connect two neighboring nodes where both of them are boundary nodes in the network. Fig. \ref{ex_EJ34} illustrates these types of links where the regular links are represented by solid lines and the wraparound links are represented by dotted lines.

The wraparound links can be recognized either by tiling or by modulo operation. By tiling, we mean that placing the EJ network at the origin of a grid and consider it as a basic EJ network with its center node is 0; and then making tiles by copying the basic EJ network and placing its copies around it. By modulo operation, we use $mod$ operator after adding $\pm 1$, $\pm \rho$, or $\pm \rho^2$ to the EJ integers to get the corresponding nodes in the basic EJ network. Note that, we have removed the straight dotted lines from node 3 to describe them as wrapped edges in the following example. Also, we have kept the nodes of the tiles that are connected to the basic EJ network through the wraparound edges and the rest of tile nodes are removed. The nodes in different tiles of the network are represented in different gray colors.

\begin{example}
\label{ex_EJ34_wraparound}
Consider the node $3\rho$ in Fig. \ref{EJ34}.  The node $3\rho$ is connected to node $1+3\rho$, which its corresponding node is $-2-\rho$ in the basic EJ network, through $+1$ edge. That is, the resultant of adding $+1$ to node $3\rho$ and then taking the $mod \ \alpha$ is node $-2-\rho$. Similarly, the $+\rho$ and $+\rho^2$ edges connect the node $3\rho$ to nodes, in respective order, $4\rho$ and $3\rho+\rho^2$, which their corresponding nodes in the basic EJ network are $-3$ and $-3\rho^2$, respectively.
\end{example}

\begin{figure}[!ht]
\centering
\includegraphics[scale=1]{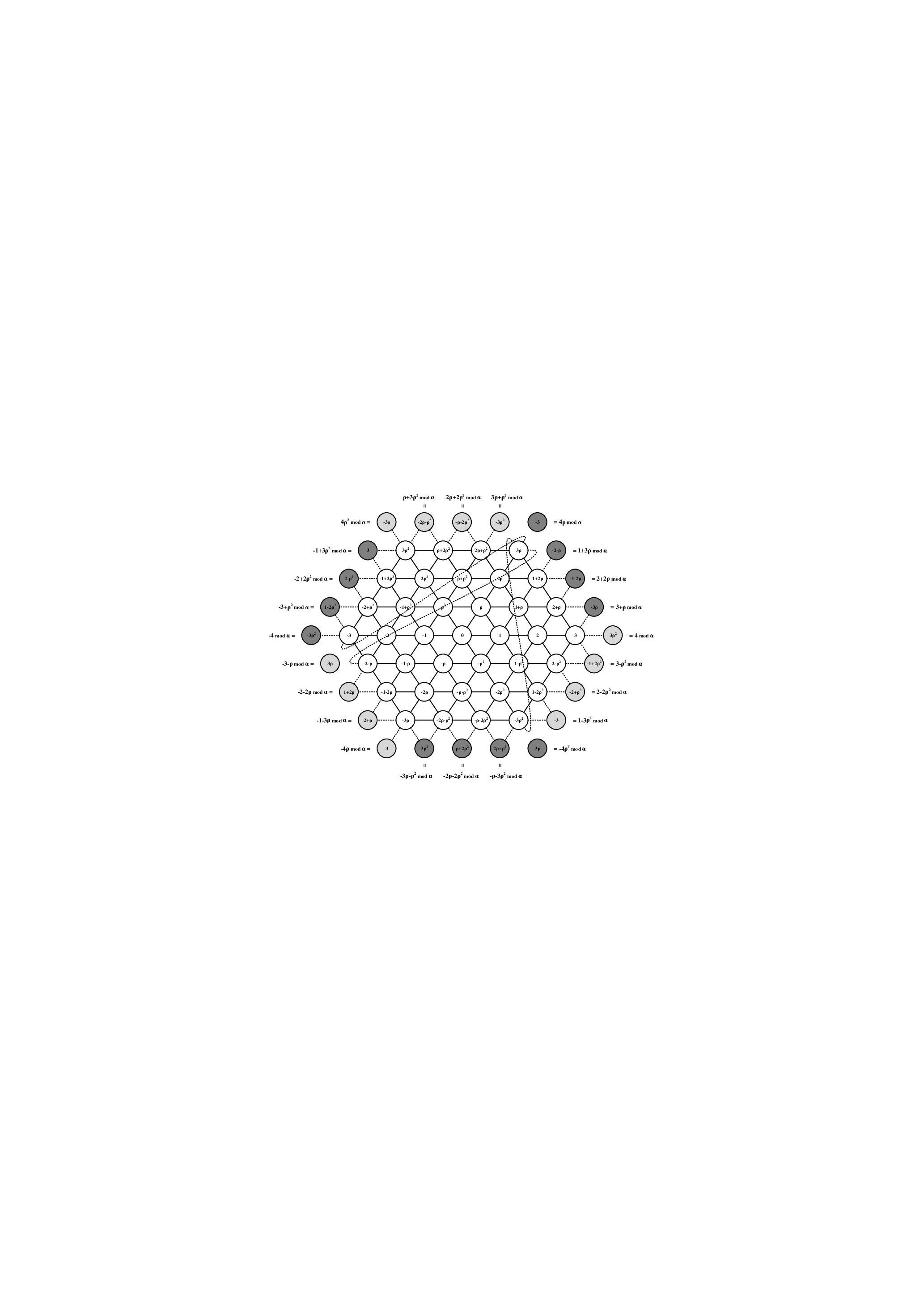}
\caption{EJ Network generated by $\alpha = 3+4\rho$ with dotted lines as wraparound edges.}
\label{EJ34}
\end{figure}

\section{Related Works\label{sec:relatedworks}}
Over the past years, the independent spanning trees have been widely studied in different types of networks. For instance, the construction of two completely independent spanning trees in any torus network and in the Cartesian product of any 2-connected graphs is investigated in \cite{hasunuma2012completely}. More studies on torus networks can be found in \cite{tang2010independent}\cite{tang2007parallel} and on Cartesian product graphs in \cite{ku2003constructing}\cite{yang2014optimal}. Additionally, The optimal independent spanning trees on Hypercubes is presented in \cite{tang2004optimal}. Further, a fully parallelized construction of ISTs on Mobius cubes has been discussed in \cite{yang2015fully}. Moreover, An implementation of a fast parallel algorithm for constructing ISTs on Parity Cubes is explained in \cite{chang2015fast}. In addition, in \cite{cheng2015dimensional}, the authors presented a common method for constructing ISTs on bijective connection networks based on V-dimensional-permutation technique. Furthermore, Building independent spanning trees on Twisted Cubes has been studied in
\cite{wang2012independent}\cite{Yang2014ASP}. There are some research studies on building ISTs in other networks such as: Crossed Cubes \cite{cheng2017constructing}, Locally Twisted Cubes \cite{Lin2010414}, Folded Hypercubes \cite{Yang20111254}\cite{Yang:2009:IST:1726593.1728973}, and Enhanced Hypercubes \cite{yang2015parallel}.

Our previous studies on independent spanning trees include the followings. In \cite{AlBdaiwi2016}, the two edge-disjoint node-independent spanning trees have been constructed for dense Gaussian networks. Further, in \cite{4ISTConference}\cite{hussain2017node}, the construction and parallel construction of four independent spanning spanning trees were presented such that the edges are not disjoint where the simulations have been done on the presence of 0, 1, 2, and 3 faulty nodes. Both studies have tree height $2k$, where $k$ is the diameter of the network. Lately, a parallel construction algorithms and its evaluations for edge-disjoint node-independent spanning trees in dense Gaussian networks was introduced in \cite{hussainparallel}.

\section{Edge-Disjoint Node-Independent Spanning Trees\label{sec:EDNIST}}

\subsection{Network Partitions}
\label{EDNISTNetworkPartitions}
Given EJ network generate by $\alpha = a+b\rho$ where $b = a+1$, the network can be partitioned into subsets as illustrated in Fig. \ref{EDNISTPartitions}. Let $c = 0, 2, 4$ for tree $t = 1, 2, 3$, respectively, and for $d = 1, 2, 3, 4, 5, 6$ such that $|x| + |y| = k$ where $k$ is the diameter of the network. Then, the subsets are as follows (all the powers of $\rho$ are modulo 6):

\noindent $B_d = \{ x\rho^{j-1} + y\rho^j \mid x > 0, y = 0, j = d+c \}$.

\noindent $T_d = \{ x\rho^{j-1} + y\rho^j \mid x > 0, y > 0, j = d+c \}$.

\noindent $S_2 = \{ x\rho^{j-1} + y\rho^j \mid x = a, y = 0, j = 2+c \}$.

\noindent $S_4 = \{ x\rho^{j-1} + y\rho^j \mid x = a, y = 0, j = 4+c \}$.

\noindent $L_4 = \{ x\rho^{j-1} + y\rho^j \mid x > 0, y = 1, j = 4+c \}$.

\noindent $L_6 = \{ x\rho^{j-1} + y\rho^j \mid x > 0, y = 1, j = 6+c \}$.

\noindent $B_2 \backslash S_2 = \{ x\rho^{j-1} + y\rho^j \mid 0 < x < k, y = 0, j = 2+c \}$.

\noindent $B_4 \backslash S_4 = \{ x\rho^{j-1} + y\rho^j \mid 0 < x < k, y = 0, j = 4+c \}$.

\noindent $T_4 \backslash L_4 = \{ x\rho^{j-1} + y\rho^j \mid x > 0, y > 1, j = 4+c \}$.

\noindent $T_6 \backslash L_6 = \{ x\rho^{j-1} + y\rho^j \mid x > 0, y > 1, j = 6+c \}$.

\hfill

%
%\begin{lemma}
%\label{EDNIST_subsets_are_disjoint}
%The subsets defined above are disjoint.
%\end{lemma}
%\begin{proof}
%Trivially, based on the conditions of the subsets we can check that for $d = 1, 2, 3, 4, 5, 6$ we get $B_d \cap T_d \cap S_2 \cap S_4 \cap L_4 \cap L_6 \cap (B_2 \backslash S_2) \cap (B_4 \backslash S_4) \cap (T_4 \backslash L_4) \cap (T_6 \backslash L_6) = \phi$ (excluding $B_2$, $B_4$, $T_4$, and $T_6$).
%\end{proof}
\begin{lemma}
\label{EDNIST_subsets_are_disjoint}
The partitions in Fig. \ref{EDNISTPartitions} are disjoint and can be obtained from the above subsets.
\end{lemma}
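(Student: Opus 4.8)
The plan is to fix a canonical address for every node of $EJ_\alpha$ and then check, sector by sector, that the listed sets refine to a partition of $V$. Using the distance distribution recalled in Section~\ref{sec:background}, for $\alpha=a+b\rho$ with $b=a+1$ the residue system of $\mathbb{Z}[\rho]$ modulo $\alpha$ is exactly the set of EJ integers at distance at most $k=a$ from $0$, and the six unit directions $\rho^{0},\rho^{1},\dots,\rho^{5}$ cut the punctured region into six congruent $60^{\circ}$ sectors. Concretely, I would argue that every node $v\neq 0$ has a \emph{unique} expression $v=x\rho^{\,j-1}+y\rho^{\,j}$ with $j\in\{1,\dots,6\}$ and integers $x\ge 1$, $y\ge 0$, $x+y=D(0,v)\le k$, once one adopts the tie-break that a pure axis node $t\rho^{\,j}$ is written with $x=t$, $y=0$ (as a member of the sector $\{x\rho^{\,j}+y\rho^{\,j+1}\}$) rather than with $x=0$ in the adjacent sector. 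Granting this, the sets $\sigma_{j}:=\{x\rho^{\,j-1}+y\rho^{\,j}\mid x\ge 1,\ y\ge 0,\ x+y\le k\}$, $j=1,\dots,6$, together with $\{0\}$, form a partition of $V$.

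Next, fix a tree index (equivalently $c\in\{0,2,4\}$) and put $j=d+c$ for $d=1,\dots,6$, powers of $\rho$ read modulo $6$. Inside $\sigma_{j}$ the conditions $y=0$ and $y>0$ split the sector as $\sigma_{j}=B_{d}\cup T_{d}$ with $B_{d}\cap T_{d}=\varnothing$. The refining sets sit inside these: $S_{2}$ and $S_{4}$ are the single outermost axis nodes $a\rho^{\,j-1}$ of the sectors $j=2+c$ and $j=4+c$, so $S_{2}\subseteq B_{2}$ and $S_{4}\subseteq B_{4}$; while $L_{4}$ and $L_{6}$ are the rows $y=1$ of the sectors $j=4+c$ and $j=6+c$, so $L_{4}\subseteq T_{4}$ and $L_{6}\subseteq T_{6}$. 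Hence each of $B_{2}\setminus S_{2}$, $B_{4}\setminus S_{4}$, $T_{4}\setminus L_{4}$, $T_{6}\setminus L_{6}$ is well defined and splits its parent set into two disjoint pieces. I would then identify the regions drawn in Fig.~\ref{EDNISTPartitions} with these subsets: $\{0\}$, the $B_{d}$ and $T_{d}$ for $d=1,\dots,6$, with $B_{2},B_{4}$ further split by $S_{2},S_{4}$ and $T_{4},T_{6}$ further split by $L_{4},L_{6}$, each region of the figure being precisely one of them.

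Disjointness then holds on two levels: two regions from different sectors are disjoint because the $\sigma_{j}$ are pairwise disjoint (the index $j$ is determined by the node), and two regions inside one sector are disjoint by the within-sector splittings above. Completeness holds because every $\sigma_{j}$ is exhausted by its regions and $\{0\}\cup\bigcup_{j=1}^{6}\sigma_{j}=V$; the identity $|V|=3a^{2}+3a+1=1+6k+\tfrac{6k(k-1)}{2}=1+\sum_{d}|B_{d}|+\sum_{d}|T_{d}|$ serves as a numerical cross-check. The step I expect to be the main obstacle is the first paragraph: proving that the chosen residue system modulo $\alpha$ really is the distance-$\le k$ ball and that the normalization $x\ge 1$, $y\ge 0$ with the stated axis tie-break places every node in exactly one sector --- in particular handling the wraparound nodes at distance $k$, whose shortest-path expressions need not be unique a priori, and verifying that this tie-break is exactly what forces $S_{2},S_{4}$ to be single nodes and $L_{4},L_{6}$ to be full rows. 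Once the canonical address is settled, the remaining disjointness and coverage claims reduce to a finite check on the parameters $(j,x,y)$.
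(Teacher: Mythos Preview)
Your proposal is correct and far more thorough than what the paper actually does. The paper's own proof of this lemma consists of a single sentence: it lists the collection $S=\{B_1,T_1,(B_2\setminus S_2),S_2,T_2,B_3,T_3,(B_4\setminus S_4),S_4,L_4,(T_4\setminus L_4),B_5,T_5,B_6,L_6,(T_6\setminus L_6)\}$ and asserts that ``based on the definition of the subsets, for any two subsets $X,Y\in S$, $X\neq Y$, $X\cap Y=\phi$.'' No canonical address is established, no sector decomposition is argued, and the wraparound issue you flag is not mentioned at all.

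What you do differently is supply the structure the paper leaves implicit: you set up the sector coordinates $(j,x,y)$, argue uniqueness of that representation (with an explicit tie-break on the axes), and then read off disjointness first between sectors and then within each sector via the $y=0$/$y>0$ and $x=a$/$y=1$ refinements. This buys you an actual proof rather than an appeal to the figure, and it also handles the coverage claim and the cardinality check---which in the paper are deferred to the separate Lemma~\ref{EDNIST_subsets_allnodes}. The obstacle you isolate (that the dense residue system really is the radius-$k$ ball and that the normalization is unambiguous, especially at the boundary ring) is precisely the point the paper takes for granted; it is a known fact about dense EJ networks, so you are not missing a tool, just noting where rigor is needed. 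In short: same destination, but you walk the route the paper only gestures at.
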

\begin{proof}
Let $S$ be the set of subsets defined above and illustrated in Fig. \ref{EDNISTPartitions}, i.e. $S = \{ B_1$, $T_1$, $(B_2 \backslash S_2)$, $S_2$ $,$ $T_2$, $B_3$, $T_3$, $(B_4 \backslash S_4)$, $S_4$, $L_4$, $(T_4 \backslash L_4)$, $B_5$, $T_5$, $B_6$, $L_6$, $(T_6 \backslash L_6) \}$. Based on the definition of the subsets, 
for any two subsets $X,Y \in S, X \neq Y, X \cap Y = \phi$. 
\end{proof}

\begin{lemma}
\label{EDNIST_subsets_allnodes}
The subsets contains all nodes in the network.
\end{lemma}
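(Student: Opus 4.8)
The plan is to combine the pairwise disjointness already established in Lemma~\ref{EDNIST_subsets_are_disjoint} with a cardinality count, so that proving the statement reduces to checking that the subsets listed in $S$ (together with the root $0$) collectively contain exactly $N(\alpha)$ distinct vertices. The structural input I would use is the standard sector description of a dense EJ network: since the generator satisfies $b=a+1$, the diameter is $k=a$, and every vertex $u\neq 0$ lies on a shortest $0$--$u$ path whose unit steps are drawn from only two \emph{consecutive} directions $\rho^{j-1},\rho^{j}$ for an essentially unique $j\in\{1,\dots,6\}$; equivalently $u=x\rho^{j-1}+y\rho^{j}$ with $x\ge 1$, $y\ge 0$ and $x+y=D(0,u)\le k$. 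Geometrically this says the dense network is tiled by six triangular sectors meeting along the six axis rays $\{m\rho^{j-1}:1\le m\le k\}$, which is the same fact that yields the distance distribution $d(t)$ recorded in Section~\ref{sec:background}.

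Granting that, the remainder is bookkeeping. For each $d$ (hence each $j=d+c$), $B_d$ is the axis ray in direction $\rho^{j-1}$ ($k$ vertices), $T_d$ is the open sector $\{x\rho^{j-1}+y\rho^{j}:x\ge 1,\,y\ge 1,\,x+y\le k\}$ ($\binom{k}{2}$ vertices), and the extra entries of $S$ --- $S_2$, $S_4$, $L_4$, $L_6$ and the matching differences $B_2\setminus S_2$, $B_4\setminus S_4$, $T_4\setminus L_4$, $T_6\setminus L_6$ --- merely re-partition those same rays and sectors, so they contribute nothing new. Hence $\bigcup_{X\in S}X=\bigcup_{d=1}^{6}B_d\cup\bigcup_{d=1}^{6}T_d$, and adjoining the common apex $0$ gives
\[
1+6k+6\binom{k}{2}=1+6k+3k(k-1)=3k^{2}+3k+1=N(\alpha)=|V|.
\]
Since the sets are disjoint by Lemma~\ref{EDNIST_subsets_are_disjoint} and all lie in $V$, matching cardinalities forces the union to be all of $V$. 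Equivalently, one can argue level by level: among the subsets, each distance $t$ with $1\le t\le k$ is represented by $6$ ray vertices $t\rho^{j-1}$ together with $6(t-1)$ sector-interior vertices, totalling $6t=d(t)$, while $t=0$ contributes only the root.

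The step I expect to be the real obstacle is the first one: showing the sector parametrization $u=x\rho^{j-1}+y\rho^{j}$ is well defined on the residue classes modulo $\alpha$, i.e. that it hits every vertex and never assigns one vertex to two different sectors. The delicate cases are exactly the vertices lying on a shared axis ray and on the near-axis layers, which is why the sets $S_2,S_4,L_4,L_6$ are singled out; one must pin down the convention (assigning a ray to the sector with the smaller admissible index $j=d+c\bmod 6$) and verify it is consistent across all six sectors and all three choices of $c$. Once that uniqueness is in hand, the cardinality computation above --- or the level-by-level count --- completes the argument.
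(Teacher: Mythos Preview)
Your approach is essentially the paper's: both arguments count the cardinalities of the listed subsets, add the root, and match against $N(\alpha)=3k^{2}+3k+1$, relying on Lemma~\ref{EDNIST_subsets_are_disjoint} for disjointness. Your version is marginally tidier in first collapsing the split pieces back to $\bigcup_{d}B_d\cup\bigcup_{d}T_d$ before counting, and the well-definedness concern you flag is precisely the distance-distribution fact $d(t)=6t$ already recorded in Section~\ref{sec:background}, which the paper likewise takes as given rather than re-proving.
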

\begin{proof}
Given the norm as a total number of nodes in the network, $N(\alpha) = a^2+b^2+ab$, then for $\alpha = k+(k+1)\rho$ we get $N(\alpha) = 3k^2 + 2k + 1$. It is obvious that $|B_d| = k$ for $d = 1,3,5,6$. Thus, we got a total of $4k$. In addition, $|S_2| = |S_4| = 1$, $|B_2 \backslash S_2| = |B_4 \backslash S_4| = k-1$, $|L_4| = |L_6| = k-1$.  Further, $|T_d| = \sum_{i=1}^{k-1}{\sum_{j=1}^{k-i} {1}} = \sum_{i=1}^{k-1}{(k-i)} = 1/2(k-1)k$ for $d = 1, 2, 3, 5$. That is, a total of $2(k-1)k$. Finally, we have $|T_4 \backslash L_4| = |T_6 \backslash L_6| = 1/2(k-1)k - (k-1)$. Thus, $B_d \cup T_d \cup S_2 \cup S_4 \cup L_4 \cup L_6 \cup (B_2 \backslash S_2) \cup (B_4 \backslash S_4) \cup (T_4 \backslash L_4) \cup (T_6 \backslash L_6) \cup \{0\}$ (including node 0) is equal to the set $V$, which is the set of nodes in the network. We conclude that, $4|B_d| + 4|T_d| + |S_2| + |S_4| + |L_4| + |L_6| + |B_2 \backslash S_2| + |B_4 \backslash S_4| + |T_4 \backslash L_4| + |T_6 \backslash L_6| + |\{0\}| = 3k^2 + 3k + 1 = N(\alpha)$ (excluding $B_2$, $B_4$, $T_4$, and $T_6$).
\end{proof}

\begin{figure}[h]
\centering
\includegraphics[scale=0.75]{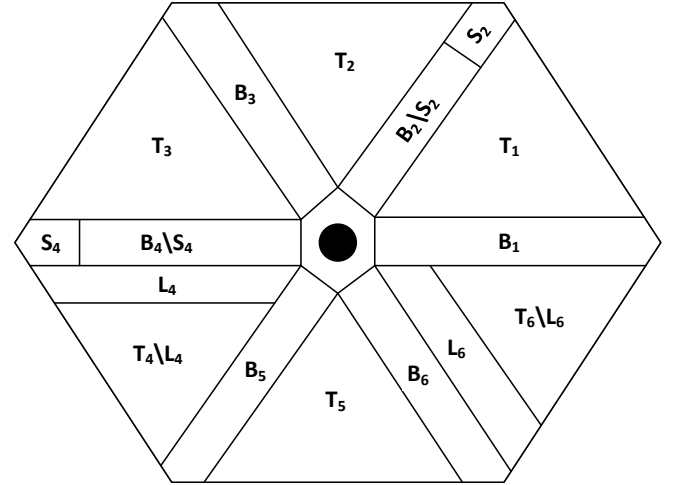}
\caption{EDNIST partitions.}
\label{EDNISTPartitions}
\end{figure}

This partitioning is helpful in finding the Edge-Disjoint Node-Independent Spanning Trees described in the following section.

\subsection{Tree Construction}
We construct the spanning tree based on Table \ref{ParentChildEDNIST}, which illustrates the parent and child nodes in the spanning tree for a given node belonging to a set.

\begin{example}
Given EJ network generated by $\alpha = 4+5\rho$ and a node $v=1+\rho$. For the first spanning tree, since $v \in T_1$, then its parent is node 1 and its child is node $1+2\rho$.
\end{example}

\begin{lemma}
\label{EDNIST_number_of_trees}
Let $ST_{ED}$ be a set of edge disjoint node independent spanning trees in $EJ$ network generated by $\alpha = a+b\rho$, where $b = a+1$, then $|ST_{ED}| \leq 3$.
\end{lemma}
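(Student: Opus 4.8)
The plan is to bound the number of edge-disjoint node-independent spanning trees by a simple degree/edge-count argument applied at the root. In any EJ network generated by $\alpha = a + b\rho$ the graph is $6$-regular, so the root node $0$ has exactly six incident edges. Each spanning tree in $ST_{ED}$ must connect $0$ to the rest of the network, so it uses at least one edge incident to $0$; since the trees are required to be edge-disjoint, each consumes a distinct incident edge of $0$, which already yields $|ST_{ED}| \le 6$. To sharpen this to $|ST_{ED}| \le 3$ I would invoke the node-independence requirement as well: the six neighbors of $0$ come in three antipodal-type pairs under the units $\{\pm 1, \pm\rho, \pm\rho^2\}$, and I would argue that any two trees that are simultaneously edge-disjoint \emph{and} node-independent cannot both route ``through'' neighbors that are forced to share intermediate vertices deep in the network. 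More precisely, I would show that the partition structure of Section~\ref{EDNISTNetworkPartitions} (the sets $B_d, T_d$, etc., indexed by the six sectors $d = 1,\dots,6$) forces the nodes in each sector to be reachable from $0$ along essentially one ``corridor,'' so that a tree using the $+\rho^{j-1}$ direction and a tree using the $-\rho^{j-1}$ direction into the same sector must reuse intermediate vertices, violating independence; hence at most one tree can ``own'' each of the three axes, giving the bound $3$.

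The key steps, in order, would be: (i) record that $\deg(0) = 6$ and that every spanning tree rooted at $0$ uses $\ge 1$ edge at $0$, so edge-disjointness gives $\le 6$ trees; (ii) pair up the six unit directions into the three axes $\{\pm 1\}$, $\{\pm\rho\}$, $\{\pm\rho^2\}$, and use the sector decomposition from Lemmas~\ref{EDNIST_subsets_are_disjoint}–\ref{EDNIST_subsets_allnodes} to show that the ``far'' nodes on a given axis (the singleton sets $S_2, S_4$ and the boundary sets $B_d$) have very few neighbors available to serve as the penultimate vertex on a root-to-node path; (iii) conclude that two trees whose root-edges lie on the same axis are forced to share an intermediate vertex somewhere along that axis, so node-independence permits at most one tree per axis; (iv) combine (iii) over the three axes to get $|ST_{ED}| \le 3$. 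An alternative, cleaner route for step (ii)–(iii) is a counting argument: sum over all trees the number of root-incident edges they use; node-independence near the root may force each tree to use \emph{two} root edges (one ``outgoing'' and the antipodal one cannot both be free for another tree), and $6/2 = 3$.

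The main obstacle I expect is step (iii): turning the informal ``corridor'' intuition into a rigorous statement that edge-disjointness together with node-independence really does cost two incident root-edges per tree (or equivalently, really does limit the trees to one per axis). This requires understanding the local structure of shortest paths in $EJ_\alpha$ near the root and near the boundary well enough to exhibit an unavoidable shared intermediate vertex; the wraparound edges complicate the picture because a node on the boundary in sector $d$ may also be reachable through sector $d'$. A safe fallback, if the sharper argument proves delicate, is to prove only $|ST_{ED}| \le 6$ from degree and then strengthen using the explicit count of ``critical'' nodes (the $S_2, S_4$ singletons and the $B_d$ chains of length $k$): if there are strictly fewer than $6k$ edges available to carry the $6$ chains without overlap, the pigeonhole principle forces the bound down. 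In the write-up I would present the degree argument first as the backbone and then layer the node-independence refinement on top, citing the partition lemmas for the structural facts about $B_d$, $S_2$, $S_4$.
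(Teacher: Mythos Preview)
Your proposal overcomplicates the argument and leaves the crucial step (your step (iii)) as an acknowledged obstacle. The paper's proof is a one-line global edge count that uses \emph{only} edge-disjointness, not node-independence: the network is $6$-regular on $N(\alpha)=3a^{2}+3a+1$ vertices, so it has $\tfrac{6}{2}\,N(\alpha)=9a^{2}+9a+3$ undirected edges in total, while any spanning tree has exactly $N(\alpha)-1=3a^{2}+3a$ edges. Edge-disjointness therefore forces
\[
|ST_{ED}|\;\le\;\left\lfloor \frac{9a^{2}+9a+3}{3a^{2}+3a}\right\rfloor \;=\;\left\lfloor 3+\frac{3}{3a^{2}+3a}\right\rfloor\;=\;3
\quad\text{for all }a\ge 1.
\]
No analysis of sectors, axes, corridors, or wraparound structure is needed, and the node-independence hypothesis is never invoked.

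Your root-degree argument (step (i)) only gives $|ST_{ED}|\le 6$, and the mechanism you propose for halving this---pairing the six unit directions into three ``axes'' and arguing that two edge-disjoint, node-independent trees cannot share an axis---is not established; you yourself flag it as the main obstacle, and the wraparound edges mean that paths from $0$ into a given sector are not confined to a single corridor, so the intuition that two trees on the same axis must collide at an intermediate vertex is far from obvious. The fallback you mention (counting ``critical'' boundary nodes) is also unnecessary once you notice the global count above. Replace your entire plan with the edge-count: state $|V|=3a^{2}+3a+1$, $|E|=9a^{2}+9a+3$, each spanning tree uses $|V|-1$ edges, and divide.
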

\begin{proof}
The total number of nodes in the EJ network generated by $\alpha = a+b\rho$ is known as $N(\alpha) = a^2 + b^2 + ab$. In case of $b = a+1$, the total number of nodes is $3a^2 + 3a+1$ and
the total number of undirected edges is $9a^2+9a+3$. Since the spanning trees are edge disjoint then each spanning tree $ST_{ED}$ must have exactly $3a^2+3a$ undirected edges. Thus, it follows that $|ST_{ED}| \leq 3$.
\end{proof}

\begin{table}[h]
\centering
\caption{Parent and child nodes for EDNIST}
\label{ParentChildEDNIST}
\begin{tabular}{|c|c|c|}
\hline
Set                             & Parent       & Child                            \\ \hline
$B_1$                           & $\rho^{j+2}$ & $\rho^{j-1}, \rho^j, \rho^{j+4}$ \\ \hline
$B_2 \backslash S_2$            & $\rho^{j+3}$ & $\rho^j, \rho^{j+2}$             \\ \hline
$S_2$                           & $\rho^{j+3}$ & $\rho^{j+2}$                     \\ \hline
$B_3 \cup B_5$                  & $\rho^{j-1}$ & --                               \\ \hline
$B_4 \backslash S_4$            & $\rho^{j+1}$ & --                               \\ \hline
$S_4$                           & $\rho^{j+1}$ & $\rho^{j+2}$                     \\ \hline
$B_6 \cup T_2 \cup T_5$         & $\rho^{j-1}$ & $\rho^{j+2}$                     \\ \hline
$T_1 \cup (T_4 \backslash L_4)$ & $\rho^{j+3}$ & $\rho^j$                         \\ \hline
$T_3 \cup (T_6 \backslash L_6)$ & $\rho^{j+1}$ & $\rho^{j+4}$                     \\ \hline
$L_4$                           & $\rho^{j+3}$ & --                               \\ \hline
$L_6$                           & $\rho^{j+1}$ & $\rho^{j+2}, \rho^{j+4}$         \\ \hline
\end{tabular}
\end{table}

\begin{lemma}
\label{EDNISTisConnected}
The first spanning tree is connected.
\end{lemma}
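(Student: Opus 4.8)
The plan is to show that the parent function defined in Table~\ref{ParentChildEDNIST} for $t=1$ (i.e.\ with $c=0$, so the set indices $d$ are read directly) induces a connected subgraph on $V$, hence a spanning tree once we also verify acyclicity and the correct edge count. Since Lemma~\ref{EDNIST_subsets_allnodes} tells us that every node lies in exactly one of the subsets $B_1$, $T_1$, $B_2\setminus S_2$, $S_2$, $T_2$, $B_3$, $T_3$, $B_4\setminus S_4$, $S_4$, $L_4$, $T_4\setminus L_4$, $B_5$, $T_5$, $B_6$, $L_6$, $T_6\setminus L_6$ (together with the root $0$), it suffices to show that, starting from any node and repeatedly applying the ``parent'' rule for whichever subset the current node belongs to, we eventually reach node $0$. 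Connectivity of the whole graph then follows because every node is joined by a finite parent-chain to the common root.

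I would organize the argument by tracing the parent-chains subset by subset. First handle the ``spine'' sets: a node $x\rho^{j-1}$ in $B_3$ or $B_5$ has parent $x'\rho^{j-1}$ with $x'=x-1$ (the rule ``parent $=\rho^{j-1}$'' should be read as decrementing the radial coordinate along direction $j-1$), so iterating walks monotonically down to $0$; similarly $B_1$'s parent rule moves a node of $B_1$ toward $0$ along its axis. Next, show that each ``triangular'' set $T_d$ and each boundary remainder ($B_2\setminus S_2$, $B_4\setminus S_4$, $S_2$, $S_4$, $L_4$, $L_6$, $T_4\setminus L_4$, $T_6\setminus L_6$) feeds, after one or a few parent steps, into one of these spine sets or into a $T_d$ that itself reduces: e.g.\ a node of $T_1$ has parent obtained by decrementing $y$, so repeated application lands it in $B_1$; a node of $B_2\setminus S_2$ and of $S_2$ has parent in the $\rho^{j+3}$ direction, which I would check lands in $B_5$ (or a node closer to $0$); the $L_6$ and $S_4$ entries likewise route into an already-handled set. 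For each subset one checks (i) the stated parent is adjacent in $EJ_\alpha$ — i.e.\ the displacement is one of $\pm1,\pm\rho,\pm\rho^2\bmod\alpha$, using $\rho^3=-1$ so that $\rho^{j+3}=-\rho^j$ etc.\ — and (ii) applying the parent rule strictly decreases a suitable potential function, for which the natural choice is the graph distance $D(v,0)$, equivalently $|x|+|y|$ in the canonical sector representation. Because $|x|+|y|$ is a nonnegative integer that strictly drops at each step (or drops after boundedly many steps, in the cases where one first crosses between sectors via a wraparound edge), every chain terminates, and it can only terminate at $0$.

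The main obstacle, and the step that deserves the most care, is the wraparound behavior at the network boundary: the sets $S_2$, $S_4$, $L_4$, $L_6$ and the outer layers of the $B_d$'s consist of boundary nodes whose ``parent'' edge is a wraparound link, so the parent node must be computed $\bmod\,\alpha$ and re-expressed in the canonical $|x|+|y|\le k$ form before one can see that the potential actually decreased. Concretely, I would verify that for a boundary node such as $a\,\rho^{j-1}=a\,\rho^{j-1}+0\cdot\rho^j\in S_2$, adding the generator corresponding to the $\rho^{j+3}=-\rho^j$ direction and reducing mod $\alpha=k+(k+1)\rho$ yields a node in $B_5$ at radial distance $\le k-1$ — this is the computation that genuinely uses $b=a+1$ and the identity $N(\alpha)=3k^2+2k+1$. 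Once the boundary cases are dispatched, the interior cases are routine monotonicity checks, and connectivity follows; acyclicity is then immediate since each non-root node has exactly one parent and the parent strictly decreases the potential, so no directed cycle can form, and a connected graph on $|V|$ vertices in which every vertex but one has a unique parent is a tree.
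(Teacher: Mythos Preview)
Your overall strategy (trace each node's parent-chain back to $0$) is sound in principle, but the specific mechanism you propose for proving termination is broken, and this is not a cosmetic gap.

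First, you misread the parent rule for $B_3\cup B_5$. The paper's convention (see the worked example with $v=1+\rho\in T_1$) is that ``Parent $=\rho^{j+3}$'' means $\text{parent}(v)=v+\rho^{j+3}$. Applying this to $B_3$ ($j=3$) gives $\text{parent}(x\rho^2)=x\rho^2+\rho^2=(x+1)\rho^2$, which moves \emph{away} from $0$, not toward it. The same happens for $B_5$, $B_6$, $T_2$, $T_5$: in each case the parent step increases the radial coordinate in its sector. So your description of these sets as a ``spine'' that walks monotonically down to $0$ is simply wrong.

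Second, and consequently, the potential $D(v,0)=|x|+|y|$ cannot work. By Lemma~\ref{EDNIST_tree_depth} the tree has depth $2k+2$, more than twice the diameter $k$; hence for many nodes the tree path from $0$ is much longer than the shortest path, and reversing it, the parent chain must \emph{increase} $D(\cdot,0)$ for $\Theta(k)$ consecutive steps before it ever decreases. Concretely, a node $-c\rho\in B_5$ has $D(-c\rho,0)=c$, yet its parent chain (per the table) first walks out to $-k\rho$ before wrapping to $B_1$. Your caveat that the potential ``drops after boundedly many steps'' at wraparound edges does not cover this: the increase occurs along ordinary interior edges, and the number of such steps is $\Theta(k)$, not $O(1)$. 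Without a valid potential you have no termination argument, and the proof collapses.

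For comparison, the paper's own proof sidesteps all of this: it simply exhibits, in Table~\ref{EDNISTPaths}, an explicit word in $\{\pm1,\pm\rho,\pm\rho^2\}$ describing a path from $0$ to every node, and combines this with the edge count $|E_1|=|V_1|-1$ from Lemma~\ref{EDNIST_number_of_trees}. If you want to rescue your approach, the honest potential is ``length of the Table~\ref{EDNISTPaths} word for $v$'', which does decrease by exactly $1$ at each parent step --- but verifying that amounts to reproducing the paper's path table, so you end up with essentially the same proof.
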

\begin{proof}
Based on Section \ref{EDNISTNetworkPartitions}, consider the $j$ values with $c = 0$. Let $ST_{ED_1}(V_1,E_1)$ represents the first edge disjoint node independent spanning tree where $V_1 \subseteq V$ and $E_1 \subseteq E$ are the set of nodes and edges in $ST_{ED_1}$, respectively. Based on Lemma \ref{EDNIST_number_of_trees}, we have $|E_1| = 3a^2 + 3a = |V_1| - 1$. Further, Table \ref{EDNISTPaths} shows the path
from the source node $S = 0$ to all other nodes in the network using tree $ST_{ED_1}$. As it is noted in Table \ref{EDNISTPaths}, the paths are described by a word on the alphabet $\{-1, 1, -\rho, \rho, -\rho^2, \rho^2\}$ where the symbols denote the direction of the edges to be passed. The number of steps are represented as $(direction)^{steps}$. We conclude that $ST_{ED_1}$ is connected.
\end{proof}

\begin{example}
In the first spanning tree, let $S = 0$ and $D = \rho^4 + 3\rho^5$ (which is $D = -1-4\rho^2$) where $x = -1$ and $y = -4$, then $D \in B_6 \cup T_5 \cup B_5$. Thus, the steps are $1 (-\rho^2)^4 (-1)^2$. That is, $D$ can be reached by going $1$ step along direction 1, then $4$ steps along  direction $-\rho^2$, and finally $2$ steps along direction $-1$. 
\end{example}

\begin{lemma}
\label{EDNISTrotating}
The second and third spanning trees can be obtained by rotating the first spanning tree.
\end{lemma}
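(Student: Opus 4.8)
The plan is to produce an explicit automorphism of $EJ_\alpha$ that fixes the root $0$ and carries the first spanning tree onto the second, with its square carrying the first onto the third. The natural candidate is multiplication by $\rho^2$: define $\phi : V \to V$ by $\phi(v) = \rho^2 v \bmod \alpha$. Since $\rho^6 = 1$, the element $\rho^2$ is a unit in $\mathbb{Z}[\rho]$, so $\phi$ is a well-defined bijection on $V = \mathbb{Z}[\rho]_\alpha$; moreover multiplication by $\rho^2$ carries the set of edge directions $\{\pm 1, \pm\rho, \pm\rho^2\}$ to $\{\pm\rho^2, \mp 1, \mp\rho\}$, which is the same set, so $\phi$ preserves adjacency and is a graph automorphism of $EJ_\alpha$. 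Plainly $\phi(0) = 0$, and $\phi^3 = \mathrm{id}$ while $\phi \neq \mathrm{id}$, so $\phi$ has order $3$; geometrically it is the $120^\circ$ rotation of the planar EJ layout about the center node, which is the rotation named in the statement.

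First I would check that $\phi$ carries the partition underlying $ST_{ED_1}$ onto the partition underlying $ST_{ED_2}$, and $\phi^2$ onto that of $ST_{ED_3}$. Every subset defined in Section~\ref{EDNISTNetworkPartitions} consists of nodes $v = x\rho^{j-1} + y\rho^j$ subject to sign/size constraints on $(x,y)$ only, with index $j = d + c_t$, where $c_t = 0,2,4$ for $t = 1,2,3$. Applying $\phi$ gives $\phi(v) = x\rho^{j+1} + y\rho^{j+2} = x\rho^{(j+2)-1} + y\rho^{j+2}$, so $\phi$ fixes the pair $(x,y)$ and raises the index by $2$. Since $c_2 - c_1 = 2$ and $c_3 - c_1 = 4$, this means $\phi$ restricts, for every subset name $X_d$, to a bijection from the $t=1$ copy of $X_d$ (index $j=d$) onto the $t=2$ copy of $X_d$ (index $j=d+2$), and $\phi^2$ onto the $t=3$ copy; and because those subsets partition $V$ by Lemmas~\ref{EDNIST_subsets_are_disjoint} and \ref{EDNIST_subsets_allnodes}, the whole $t=1$ partition is carried onto the whole $t=2$ partition.

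The core of the proof is that the parent/child assignment of Table~\ref{ParentChildEDNIST} is covariant under this index shift. For a node $v$ in subset $X_d$ with index $j$, each entry in its row is a direction of the form $\rho^{j+m}$ with the integer $m$ depending only on the row (for instance a $B_1$ node has parent direction $\rho^{j+2}$ and child directions $\rho^{j-1},\rho^{j},\rho^{j+4}$); hence the parent and children of $v$ in $ST_{ED_1}$ are exactly the nodes $v + \rho^{j+m}$. Applying $\phi$, $\phi(v + \rho^{j+m}) = \rho^2 v + \rho^{(j+2)+m} = \phi(v) + \rho^{(j+2)+m}$, and $(j+2)$ is precisely the index of $\phi(v)$ in the $t=2$ copy of $X_d$, so this is exactly the node Table~\ref{ParentChildEDNIST} assigns to $\phi(v)$ in the same role. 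Thus $\phi$ maps each tree edge of $ST_{ED_1}$ to the corresponding tree edge of $ST_{ED_2}$, so $\phi(E_1) = E_2$, and iterating gives $\phi^2(E_1) = E_3$; the merged rows ($B_3\cup B_5$, $B_6\cup T_2\cup T_5$, $T_1\cup(T_4\backslash L_4)$, $T_3\cup(T_6\backslash L_6)$) and the edges incident to the root are handled identically, since every constituent subset obeys the one common rule. Finally, a graph automorphism sends any spanning tree to a spanning tree and a tree rooted at $r$ to one rooted at $\phi(r)$; since $\phi(0)=0$ and $ST_{ED_1}$ is a spanning tree rooted at $0$ by Lemma~\ref{EDNISTisConnected}, the images $ST_{ED_2}=\phi(ST_{ED_1})$ and $ST_{ED_3}=\phi^2(ST_{ED_1})$ are spanning trees rooted at $0$, obtained from the first by the $120^\circ$ and $240^\circ$ rotations. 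The one delicate point is the bookkeeping that confirms every entry of Table~\ref{ParentChildEDNIST} really has the form $\rho^{j+m}$ with $m$ independent of the tree index $t$ — i.e. that the table was genuinely written relative to each node's own index $j$ — after which the rotation argument runs mechanically.
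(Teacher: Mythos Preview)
Your argument is correct and is really the rigorous version of what the paper only sketches. The paper's own proof simply invokes the network's symmetry and says that Lemmas~\ref{EDNIST_subsets_are_disjoint}--\ref{EDNISTisConnected} go through verbatim once one replaces $c=0$ by $c=2$ or $c=4$ in the subset definitions; it never names the rotation explicitly nor checks that Table~\ref{ParentChildEDNIST} is written in $j$-relative form. You take the slightly different (and cleaner) route of exhibiting the automorphism $\phi\colon v\mapsto\rho^{2}v$, verifying that it shifts every subset index by $2$ and commutes with the parent/child rule because each entry is $\rho^{j+m}$ with $m$ row-dependent only, and then transferring the spanning-tree property of $ST_{ED_1}$ along $\phi$ and $\phi^{2}$. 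Both arguments rest on the same $120^{\circ}$ symmetry encoded by the parameter $c$, but yours actually proves that $ST_{ED_2}$ and $ST_{ED_3}$ are the rotated images of $ST_{ED_1}$, whereas the paper in effect re-derives connectivity for each tree separately; your version is what one would want if the lemma is to be used later (as in Theorem~\ref{EDNIST_tree_connected}) to compare the three trees to one another.
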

\begin{proof}
Based on Lemmas \ref{EDNIST_subsets_are_disjoint} and \ref{EDNIST_subsets_allnodes}, and Table \ref{ParentChildEDNIST}, since the network is symmetric then it is sufficient to prove that the obtained second and third spanning trees are connected by following Lemma \ref{EDNISTisConnected}, but with different $j$ values with $c = 2, 4$ as described in Section \ref{EDNISTNetworkPartitions}.
\end{proof}

\begin{theorem}
\label{EDNIST_tree_connected}
$ST_{ED_t}$, for $t = 1, 2, 3$, are edge disjoint node independent spanning trees.
\end{theorem}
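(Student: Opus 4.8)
The plan is to combine the three lemmas already established in this section into a single verification that the family $\{ST_{ED_1}, ST_{ED_2}, ST_{ED_3}\}$ simultaneously meets the three defining requirements: each $ST_{ED_t}$ is a spanning tree, the three trees are pairwise edge-disjoint, and they are node-independent (their $r$-to-$u$ paths share no internal vertex). First I would record that each $ST_{ED_t}$ is a spanning tree. By Lemma~\ref{EDNISTisConnected} the first tree $ST_{ED_1}$ is connected on the vertex set $V$ (Lemmas~\ref{EDNIST_subsets_are_disjoint} and~\ref{EDNIST_subsets_allnodes} guarantee the partition covers $V$ exactly once), and by the edge count in Lemma~\ref{EDNIST_number_of_trees} it has $|V|-1 = 3a^2+3a$ edges; a connected graph on $|V|$ vertices with $|V|-1$ edges is a tree. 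Lemma~\ref{EDNISTrotating} then transfers connectedness — and hence the tree property, since the edge count is preserved under the rotation $\rho^j \mapsto \rho^{j+2}$ — to $ST_{ED_2}$ and $ST_{ED_3}$.

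Next I would address edge-disjointness. The natural argument is a counting argument pinned down by the rotational structure: the three trees are images of one another under multiplication by $\rho^2$ (the $c=0,2,4$ shift in Section~\ref{EDNISTNetworkPartitions}), and $\rho^6 = 1$, so applying the rotation three times returns to the start. Each tree uses exactly $3a^2+3a$ undirected edges by Lemma~\ref{EDNIST_number_of_trees}, while the whole network has $9a^2+9a+3$ edges; thus $|E_1|+|E_2|+|E_3| = 9a^2+9a = |E| - 3$, which leaves essentially no slack. I would argue that the map sending an edge of $ST_{ED_1}$ of ``type'' $\rho^j$ incident along one of the six directions to its $\rho^2$-rotated copy is a well-defined injection into $ST_{ED_2}$, and that an edge fixed by this order-3 action would have to be one of the three ``diameter'' wraparound edges through node $0$ (the $S_2$, $S_4$, $L$-type boundary edges), which Table~\ref{ParentChildEDNIST} shows are assigned to distinct trees. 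Hence the three edge sets are pairwise disjoint — and in fact partition $E$ up to those three special edges.

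For node-independence I would use Table~\ref{EDNISTPaths}: for a fixed destination $u$, that table (for $ST_{ED_1}$) and its $\rho^2$- and $\rho^4$-rotations give three explicit words $w_1, w_2, w_3$ over $\{\pm 1, \pm\rho, \pm\rho^2\}$ describing $P_{ST_{ED_t}}(0,u)$. I would verify that the sets of intermediate lattice points traced out by $w_1, w_2, w_3$ are pairwise disjoint; because the three words are rotations of one another by $\rho^2$ and each path stays inside its own ``$120^\circ$ sector'' of the partition (the $B$, $T$, $S$, $L$ subsets rotate into disjoint copies), a point common to two paths would force a point of one sector to lie in a rotated copy of another, which the disjointness in Lemma~\ref{EDNIST_subsets_are_disjoint} forbids except possibly at the shared endpoints $0$ and $u$. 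The main obstacle I anticipate is exactly this last step: unlike the clean global counting used for edges, node-independence is a statement about the \emph{interiors} of the routing paths, and one must check that when the destination lies in a region where several of the parent/child rules of Table~\ref{ParentChildEDNIST} overlap (e.g. the $B_3 \cup B_5$ or $B_6 \cup T_2 \cup T_5$ rows, or a wraparound case), the three rotated paths still do not collide before reaching $u$ — this likely requires a short case analysis on which sector(s) $u$ belongs to, rather than a one-line argument.
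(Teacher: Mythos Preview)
Your proposal is considerably more detailed than the paper's own proof, which is a two-sentence appeal to Lemmas~\ref{EDNIST_subsets_are_disjoint}--\ref{EDNISTrotating} and Tables~\ref{ParentChildEDNIST} and~\ref{EDNISTPaths}, followed by the bare assertion that $ST_{ED_t}(E)\cap ST_{ED_{t'}}(E)=\phi$. In effect the paper treats both edge-disjointness and node-independence as verifiable by direct inspection of the parent/child table under the $c=0,2,4$ shifts, and does not spell out a structural argument.

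Your attempt to supply such an argument for edge-disjointness has a gap. The count $|E_1|+|E_2|+|E_3|=9a^2+9a=|E|-3$ does \emph{not} force disjointness: three edges of slack leaves room for overlap. You then try to close this by examining edges \emph{fixed} by the $\rho^2$-rotation, but that is the wrong condition. An edge $e$ lies in $ST_{ED_1}\cap ST_{ED_2}$ precisely when both $e$ and $\rho^{-2}e$ lie in $ST_{ED_1}$ --- that is, when $ST_{ED_1}$ contains two edges from the same $\rho^2$-orbit --- not when $e$ itself is a fixed point of the rotation. What you actually need is that $ST_{ED_1}$ selects at most one edge from each size-$3$ orbit of the $\mathbb{Z}/3$ action; verifying this is exactly the row-by-row inspection of Table~\ref{ParentChildEDNIST} that the paper is implicitly invoking, and the rotational symmetry alone does not deliver it for free.

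On node-independence, your sector argument is a reasonable plan and in fact offers more than the paper does (which simply concludes after asserting edge-disjointness). You are right that the delicate part is the wraparound/boundary rows of Table~\ref{EDNISTPaths}, and since edge-disjointness does not by itself imply node-independence, a short case analysis of the three rotated paths to each destination is genuinely required here.
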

\begin{proof}
Based on Lemmas \ref{EDNIST_subsets_are_disjoint}-\ref{EDNISTrotating}, and Tables \ref{ParentChildEDNIST} and \ref{EDNISTPaths},
let $ST_{ED_t}(E)$ be the set of undirected edges for spanning tree $t$.
Thus, we get $ST_{ED_t}(E) \cap ST_{ED_{t'}}(E) = \phi, t, t' \in \{1, 2, 3\}, t \neq t'$. We conclude that all trees are edge disjoint node independent spanning trees.
\end{proof}

\begin{table}[h]
\centering
\caption{Steps from node $S=0$ to all other nodes $D=x\rho^{j-1}+y\rho^j$, where $k$ is the diameter}
\label{EDNISTPaths}
\begin{tabular}{|c|c|c|}
\hline
Node in set                                            & Path (steps)   \\ \hline
$B_1$                                                    & $(1)^x$         \\ \hline
$T_1$                                                    & $(1)^x (\rho)^y$         \\ \hline
$\{B_2 \backslash S_2\} \cup S_2$        & $(\rho)^y$         \\ \hline
$T_2 \cup B_3$                                     & $(\rho)^{|y|} (-1)^{|x|}$  \\
                                                              & (after converting to form $x+y\rho$)         \\ \hline
$T_3 \{B_4 \backslash S_4\} \cup S_4$  & $(1)^{k-|x|+1} (-\rho^2)^{k-y}$         \\ \hline
$L_4 \cup \{T_4 \backslash L_4\}$        & $(1)^{k-|x|} (\rho)^{k-|y|+1}$         \\ \hline
$L_6 \cup \{T_6 \backslash L_6\}$        & $(1)^x (-\rho^2)^{y}$         \\ \hline
$B_6 \cup T_5 \cup B_5$                       & $(1) (-\rho^2)^{|y|} (-1)^{|x|+1}$ \\
                                                              & (after converting to form $x+y\rho^2$)        \\ \hline
\end{tabular}
\end{table}

\begin{lemma}
\label{EDNIST_tree_depth}
The depth of all trees $ST_{ED}$, for $t = 1, 2, 3$, is $2k+2$.
\end{lemma}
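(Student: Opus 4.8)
\textbf{Proof proposal for Lemma~\ref{EDNIST_tree_depth}.}

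The plan is to bound the depth of the first tree $ST_{ED_1}$ by inspecting the longest root-to-node path among the path descriptions in Table~\ref{EDNISTPaths}, and then to invoke Lemma~\ref{EDNISTrotating} (together with the symmetry of the network) to conclude that the second and third trees have the same depth. So the whole argument reduces to a finite case analysis over the eight rows of Table~\ref{EDNISTPaths}.

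For each row I would read off the total number of steps in the path word $(direction)^{steps}$ as a function of $x$, $y$, and the diameter $k$, recalling the standing constraint $|x|+|y|=k$ on the ``outermost'' nodes (and $|x|+|y|\le k$ in general), so that each path length is a linear expression in $k$. For instance, the rows $B_1$ and $\{B_2\backslash S_2\}\cup S_2$ give length at most $k$; the row $T_1$ gives length $x+y\le k$; the rows $T_2\cup B_3$ and $B_6\cup T_5\cup B_5$ give lengths like $|y|+|x|\le k$ and $1+|y|+(|x|+1)=|x|+|y|+2\le k+2$; and the rows $T_3\cup\{B_4\backslash S_4\}\cup S_4$ and $L_4\cup\{T_4\backslash L_4\}$ give lengths $(k-|x|+1)+(k-y)$ and $(k-|x|)+(k-|y|+1)$, which under $|x|+|y|\le k$ are bounded by $2k+1$ and $2k+1$; one then checks that the extreme configuration (a corner node with, say, $|x|$ as small as the set definition allows) pushes one of these to exactly $2k+2$. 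Collecting the maxima over all rows yields that every node is reached by a path of length at most $2k+2$, and exhibiting a single node attaining $2k+2$ shows the bound is tight, so $ST_{ED_1}$ has depth exactly $2k+2$.

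Finally, by Lemma~\ref{EDNISTrotating} the trees $ST_{ED_2}$ and $ST_{ED_3}$ are obtained from $ST_{ED_1}$ by the rotations corresponding to $c=2$ and $c=4$; since a rotation is a graph automorphism of the EJ network fixing the root, it preserves all root-to-node distances, hence preserves the depth. Therefore all three trees have depth $2k+2$.

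I expect the main obstacle to be the bookkeeping in the two ``reflected'' rows ($T_3\cup\{B_4\backslash S_4\}\cup S_4$ and $L_4\cup\{T_4\backslash L_4\}$), where the path is expressed in terms of $k-|x|$ and $k-|y|$ rather than $|x|$ and $|y|$: one must be careful about which boundary of the set is attained, and about the $+1$ corrections coming from the conversions between the $x+y\rho$, $x+y\rho^2$ forms, to be sure the worst case is $2k+2$ and not $2k+1$ or $2k+3$. Verifying that some concrete node (a suitable corner of the partition, which I would identify explicitly from the set definitions in Section~\ref{EDNISTNetworkPartitions}) really does sit at depth $2k+2$ is what makes the equality, as opposed to a mere upper bound, and this is the one place a short explicit computation is unavoidable.
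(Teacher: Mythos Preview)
Your overall plan---analyse $ST_{ED_1}$ row by row from Table~\ref{EDNISTPaths} and then transport the result to $ST_{ED_2},ST_{ED_3}$ via the rotation Lemma~\ref{EDNISTrotating}---is exactly the paper's approach. The gap is in the row-by-row arithmetic: you mishandle the two rows that carry the parenthetical ``after converting to form $x+y\rho$'' / ``$x+y\rho^2$''. For $B_6\cup T_5\cup B_5$ you write the length as $1+|y|+(|x|+1)\le k+2$, implicitly keeping the sector constraint $|x|+|y|\le k$. But the table tells you to first rewrite the node in the form $x+y\rho^2$, and after that change of basis the constraint $|x|+|y|\le k$ no longer holds. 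Concretely, the $B_5$ endpoint $-k\rho=k\rho^{4}$ becomes $-k-k\rho^{2}$, so $|x|=|y|=k$ and the path length is $1+k+(k+1)=2k+2$. This row, not the ``reflected'' rows, is where the depth is attained; the paper singles it out and names the witness node $-k\rho$.

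Conversely, the rows you flag as the likely maxima do \emph{not} reach $2k+2$. For $T_3\cup\{B_4\backslash S_4\}\cup S_4$ the length is $(k-|x|+1)+(k-y)=2k+1-|x|-y$, and since $|x|\ge 1$ and $y\ge 0$ in those sets this is at most $2k$; for $L_4\cup\{T_4\backslash L_4\}$ the length is $(k-|x|)+(k-|y|+1)=2k+1-|x|-|y|\le 2k-1$ because $|x|\ge1$ and $|y|\ge1$ there. So your hoped-for ``extreme configuration pushing one of these to exactly $2k+2$'' does not exist, and the proof as you sketched it would stall at an upper bound of $2k+1$ with no matching witness. Redo the $B_6\cup T_5\cup B_5$ row with the coordinate conversion taken into account and the argument goes through exactly as the paper has it.
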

\begin{proof}
The proof is provided for tree $ST_{ED_1}$. The same proof can be applied to the other trees accordingly.
Based on Lemma \ref{EDNIST_subsets_allnodes} and Table \ref{EDNISTPaths}, the longest path in tree $ST_{ED_1}$ starting from node 0 is $2k+2$, which leads to node $-k\rho$ or to node $k\rho^2$.
Further, the last set in Table \ref{EDNISTPaths} is $B_6 \cup T_5 \cup B_5$ which has a maximum steps of $max |x| + max |y| + 1 + 1 = k + k + 2 = 2k +2$.
\end{proof}

\begin{figure*}[h]
\centering
\includegraphics[scale=0.60]{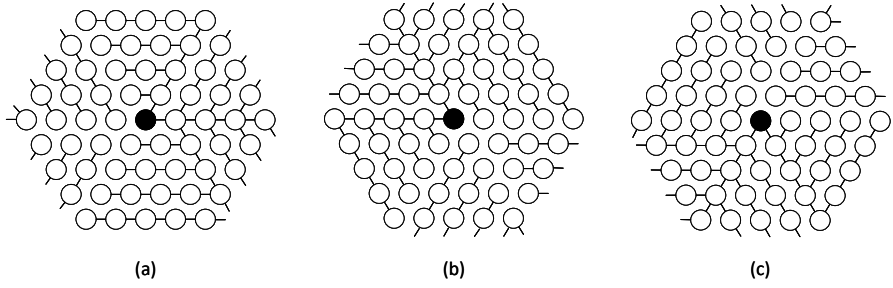}
\caption{First-third EDNISTs from (a) to (c), respectively, for EJ with $\alpha = 4+5\rho$.}
\label{3EDNISTs}
\end{figure*}

\begin{example}
Fig. \ref{3EDNISTs}(a), Fig. \ref{3EDNISTs}(b), and Fig. \ref{3EDNISTs}(c) illustrate the first, second, and third edge disjoint node independent spanning trees in EJ network generated by $\alpha = 4+5\rho$, respectively.
\end{example}

\section{Node-Independent Spanning Trees\label{sec:IST}}
This section discusses the construction of six node-independent spanning trees in EJ networks. First, we describe the network partitions in Section \ref{ISTNetworkPartitions}, which help in constructing these trees as illustrated in Section \ref{ISTTreeConstruction}.

\subsection{Network Partitions\label{ISTNetworkPartitions}}
The EJ network generated by $\alpha = a+b\rho$, where $b = a+1$ can be partitioned into disjoint subsets, as shown in Fig. \ref{ISTPartitions}. The disjoint subsets are described as follows. Let $c = t-1$ for tree $t = 1, 2, 3, 4, 5, 6$, $d = 1, 2, 3, 4, 5, 6$, and all the powers of $\rho$ are modulo 6. In addition, Let $|x| + |y| = k$ where $k$ is the network diameter, then:

\noindent $B_d = \{ x\rho^{j-1} + y\rho^j \mid x > 0, y = 0, j = d+c \}$.

\noindent $T_d = \{ x\rho^{j-1} + y\rho^j \mid x > 0, y > 0, j = d+c \}$.

\noindent $S = \{ x\rho^{j-1} + y\rho^j \mid x = 1, y = 0, j = 5+c \}$.

\noindent $B_5 \backslash S = \{ x\rho^{j-1} + y\rho^j \mid x > 1, y = 0, j = 5+c \}$.

\noindent $L_3 = \{ x\rho^{j-1} + y\rho^j \mid x > 0, y = 1, j = 3+c \}$.

\noindent $L_4 = \{ x\rho^{j-1} + y\rho^j \mid x > 0, y = 1, j = 4+c \}$.

\noindent $T_3 \backslash L_3 = \{ x\rho^{j-1} + y\rho^j \mid x > 0, y > 1, j = 3+c \}$.

\noindent $T_4 \backslash L_4 = \{ x\rho^{j-1} + y\rho^j \mid x > 0, y > 1, j = 4+c \}$.

\hfill

%\begin{lemma}
%\label{IST_subsets_are_disjoint}
%The subsets defined above are disjoint.
%\end{lemma}
%\begin{proof}
%Trivially, based on the conditions of the subsets we can check that for $d = 1, 2, 3, 4, 5, 6$ we get $B_d \cap T_d \cap S \cap (B_5 \backslash S) \cap L_3 \cap L_4 \cap (T_3 \backslash L_3) \cap (T_4 \backslash L_4) = \phi$ (excluding $B_5$, $T_3$, and $T_4$).
%\end{proof}
\begin{lemma}
\label{IST_subsets_are_disjoint}
The partitions in Fig. \ref{ISTPartitions} are disjoint and can be obtained from the above subsets.
\end{lemma}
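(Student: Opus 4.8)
The plan is to mirror the proof of Lemma~\ref{EDNIST_subsets_are_disjoint}, since the two statements are structurally identical: both assert that a finite collection of explicitly-defined node subsets is pairwise disjoint and that the union recovers exactly the intended partition cells shown in the figure. First I would fix a tree index $t$ (equivalently $c=t-1$), so that for each $d\in\{1,\dots,6\}$ the exponent $j=d+c$ is determined modulo $6$, and collect the relevant subsets into one list $S=\{B_1,T_1,B_2,T_2,B_3,T_3,L_3,(T_3\backslash L_3),B_4,L_4,(T_4\backslash L_4),S,(B_5\backslash S),B_6,T_6\}$ (dropping $T_3$, $T_4$, $B_5$ in favor of their refined pieces, exactly as in the EDNIST case). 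The key observation is that every subset is cut out by three pinned coordinates: the angular sector index $j\bmod 6$, a constraint on $x$ (one of $x>0$, $x>1$, $x=1$), and a constraint on $y$ (one of $y=0$, $y=1$, $y>0$, $y>1$).

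The main step is then a finite case check. For two subsets in $S$ with different $j$ values, disjointness is immediate because a nonzero node $x\rho^{j-1}+y\rho^j$ with $x>0$, $y\ge 0$, $|x|+|y|=k$ has a unique such representation — this is the same ``canonical wedge representation'' fact invoked (implicitly) in Lemma~\ref{EDNIST_subsets_are_disjoint}, and I would cite that lemma's reasoning rather than re-derive it. For two subsets sharing the same $j$, I would observe that the pairs occurring in $S$ are exactly $\{B_d,T_d\}$ (separated by $y=0$ versus $y>0$), $\{L_3,T_3\backslash L_3\}$ and $\{L_4,T_4\backslash L_4\}$ (separated by $y=1$ versus $y>1$), and $\{S,B_5\backslash S\}$ (separated by $x=1$ versus $x>1$); in each case the defining inequalities are mutually exclusive, so the intersection is empty. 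That exhausts all pairs, giving pairwise disjointness.

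For the ``can be obtained'' half, I would note that each figure cell is by construction one of these subsets (or one of the refined pieces), so the identification is definitional: $B_5 = S\cup(B_5\backslash S)$, $T_3 = L_3\cup(T_3\backslash L_3)$, $T_4 = L_4\cup(T_4\backslash L_4)$, and all other cells appear verbatim. Combined with the disjointness just established, this shows the listed subsets form exactly the partition depicted in Fig.~\ref{ISTPartitions}.

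I expect the only real obstacle to be bookkeeping rather than mathematics: one must be careful that the refinement chosen here ($S$ sitting inside $B_5$, and the $L_3,L_4$ pieces inside $T_3,T_4$) differs from the EDNIST refinement ($S_2,S_4\subset B_2,B_4$ and $L_4,L_6\subset T_4,T_6$), so the list $S$ and the matched figure cells have to be transcribed correctly for this section's partition; and one must remember that $j$ is taken modulo $6$, so e.g. $j=7$ and $j=1$ name the same sector and the corresponding subsets must not be double-counted. Once the list is pinned down, the proof is the same three-line argument as Lemma~\ref{EDNIST_subsets_are_disjoint}: distinct $j$ $\Rightarrow$ disjoint by unique wedge representation; equal $j$ $\Rightarrow$ disjoint by incompatible $x$- or $y$-constraints; union of the pieces rebuilds each figure cell.
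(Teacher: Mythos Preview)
Your approach is exactly the paper's: list the cells of the partition in a family $S$ and then appeal to the defining constraints to conclude pairwise disjointness. In fact your argument is considerably more explicit than the paper's one-line proof, which simply writes down $S$ and asserts ``based on the definition of the subsets, for any two subsets $X,Y\in S$, $X\neq Y$, $X\cap Y=\phi$''; your separation into the two cases (distinct sector index $j$ versus equal $j$ with incompatible $x$/$y$ ranges) supplies the reasoning the paper leaves implicit.

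That said, your transcribed list $S$ has precisely the bookkeeping slip you warned yourself about. You write
\[
S=\{B_1,T_1,B_2,T_2,B_3,T_3,L_3,(T_3\backslash L_3),B_4,L_4,(T_4\backslash L_4),S,(B_5\backslash S),B_6,T_6\},
\]
but this both includes $T_3$ alongside its refinement $L_3\cup(T_3\backslash L_3)$, which immediately breaks pairwise disjointness since $T_3\cap L_3=L_3\neq\phi$, and omits $T_5$, which is a genuine cell in the figure (and is needed in Lemma~\ref{IST_subsets_allnodes} for the node count). The correct list, matching the paper, is
\[
S=\{B_1,T_1,B_2,T_2,B_3,L_3,(T_3\backslash L_3),B_4,L_4,(T_4\backslash L_4),(B_5\backslash S),S,T_5,B_6,T_6\}.
\]
With that correction your argument goes through verbatim and subsumes the paper's proof.
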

\begin{proof}
Let $S$ be the set of subsets defined above and illustrated in Fig. \ref{ISTPartitions}, i.e. $S = \{ B_1$, $T_1$, $B_2$, $T_2$, $B_3$, $L_3$, $(T_3 \backslash L_3)$, $B_4$, $L_4$, $(T_4 \backslash L_4)$, $(B_5 \backslash S)$, $S$, $T_5$, $B_6$, $T_6 \}$. Based on the definition of the subsets, 
for any two subsets $X,Y \in S, X \neq Y, X \cap Y = \phi$. 
\end{proof}

\begin{lemma}
\label{IST_subsets_allnodes}
The subsets contains all nodes in the network.
\end{lemma}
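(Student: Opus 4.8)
The plan is to mirror the counting argument already used in Lemma~\ref{EDNIST_subsets_allnodes}, but for the IST partition of Fig.~\ref{ISTPartitions}. First I would fix the diameter $k$ and recall that the EJ network generated by $\alpha = k+(k+1)\rho$ has $N(\alpha) = 3k^2+3k+1$ nodes, counting node $0$ together with $6t$ nodes at each distance $t$ for $1 \le t \le k$. The goal is then to show that the disjoint subsets listed in Lemma~\ref{IST_subsets_are_disjoint}, together with $\{0\}$, have cardinalities summing to exactly $N(\alpha)$; since by Lemma~\ref{IST_subsets_are_disjoint} they are pairwise disjoint, this forces their union to be all of $V$.

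The key steps, in order, are: (1) compute $|B_d|$ for each of the six ``boundary'' sets $B_1,\dots,B_6$ — each is a segment $\{x\rho^{j-1} : x>0,\ |x|\le k\}$, hence has size $k$, except that for $d=5$ the set is split as $S \cup (B_5\backslash S)$ with $|S|=1$ and $|B_5\backslash S| = k-1$; (2) compute $|T_d|$ for the ``triangular'' sets: $T_d = \{x\rho^{j-1}+y\rho^j : x>0,\ y>0,\ x+y\le k\}$ has size $\sum_{i=1}^{k-1}(k-i) = \tfrac12 k(k-1)$ when taken whole (this holds for $d=1,2,5,6$), while for $d=3,4$ one peels off the line $y=1$, so $|L_3|=|L_4|=k-1$ and $|T_3\backslash L_3| = |T_4\backslash L_4| = \tfrac12 k(k-1) - (k-1)$; (3) add everything: $6$ boundary lines contribute $6k$ in total (distributed as $5k + (k-1) + 1$), the four full triangles contribute $4\cdot\tfrac12 k(k-1) = 2k(k-1)$, the two split triangles contribute $2\bigl(\tfrac12 k(k-1)-(k-1)\bigr) + 2(k-1) = k(k-1)$, and adding $|\{0\}|=1$ gives $6k + 2k(k-1) + k(k-1) + 1 = 3k^2 + 3k + 1 = N(\alpha)$; (4) invoke Lemma~\ref{IST_subsets_are_disjoint} to conclude that a disjoint union of sets whose sizes sum to $|V|$ must equal $V$.

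I would double-check the bookkeeping against the small instance $\alpha = 4+5\rho$ (so $k=4$, $N(\alpha)=61$) shown in Fig.~\ref{ISTPartitions}: here $6k = 24$, $2k(k-1) = 24$, $k(k-1) = 12$, and $24+24+12+1 = 61$, which matches. The main obstacle is not conceptual but combinatorial precision: one has to be careful that the $\rho$-rotations indexed by $j = d+c$ do not cause the ranges of $x$ and $y$ to overlap across different $d$ (handled by Lemma~\ref{IST_subsets_are_disjoint}) and, more delicately, that the two boundary lines of each triangular region are not double-counted — specifically that the diagonal $x+y=k$ rim, the axis $y=0$, and the axis $x=0$ are each assigned to exactly one named set. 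Since every listed subset explicitly requires $x>0$ (and the $B_d$ additionally require $y=0$ while the $T_d$ require $y>0$), the axes and the origin are partitioned cleanly, and the diagonal rim is absorbed into the $T_d$ and $B_d$ via the constraint $|x|+|y|=k$; verifying this exhausts $V$ with no gaps is the one place the argument must be watertight.
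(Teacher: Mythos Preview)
Your proposal is correct and follows essentially the same counting argument as the paper: compute the cardinality of each named subset ($|B_d|=k$, $|T_d|=\tfrac12 k(k-1)$, $|S|=1$, $|L_3|=|L_4|=k-1$, etc.), add them together with $|\{0\}|$, and verify the total equals $N(\alpha)=3k^2+3k+1$. The only differences are cosmetic --- you group the terms slightly differently, explicitly invoke Lemma~\ref{IST_subsets_are_disjoint} to close the logical gap between ``sizes sum to $|V|$'' and ``union equals $V$'', and add a sanity check for $k=4$ --- none of which changes the substance of the argument.
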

\begin{proof}
Given the norm as a total number of nodes in the network, $N(\alpha) = a^2+b^2+ab$, then for $\alpha = k+(k+1)\rho$ we get $N(\alpha) = 3k^2 + 2k + 1$. It is obvious that $|B_d| = k$ for $d = 1,2,3,4,6$. Thus, we got a total of $5k$. In addition, $|S| = 1$, $|B_5 \backslash S| = k-1$, $|L_3| = k-1$, $|L_4| = k-1$.  Further, $|T_d| = \sum_{i=1}^{k-1}{\sum_{j=1}^{k-i} {1}} = \sum_{i=1}^{k-1}{(k-i)} = 1/2(k-1)k$ for $d = 1, 2, 5, 6$. That is, a total of $2(k-1)k$. Finally, we have $|T_3 \backslash L_3| = |T_4 \backslash L_4| = 1/2(k-1)k - (k-1)$. Thus, $B_d \cup T_d \cup S \cup (B_5 \backslash S) \cup L_3 \cup L_4 \cup (T_3 \backslash L_3) \cup (T_4 \backslash L_4) \cup \{0\}$ (including node 0) is equal to the set $V$, which is the set of nodes in the network. We conclude that, $5|B_d| + 4|T_d| + |S| + |(B_5 \backslash S)| + |L_3| + |L_4| + |(T_3 \backslash L_3)| + |(T_4 \backslash L_4)| + |\{0\}| = 3k^2 + 3k + 1 = N(\alpha)$ (excluding $|B_5|$, $|T_3|$, and $|T_4|$).
\end{proof}

\begin{figure}[h]
\centering
\includegraphics[scale=1]{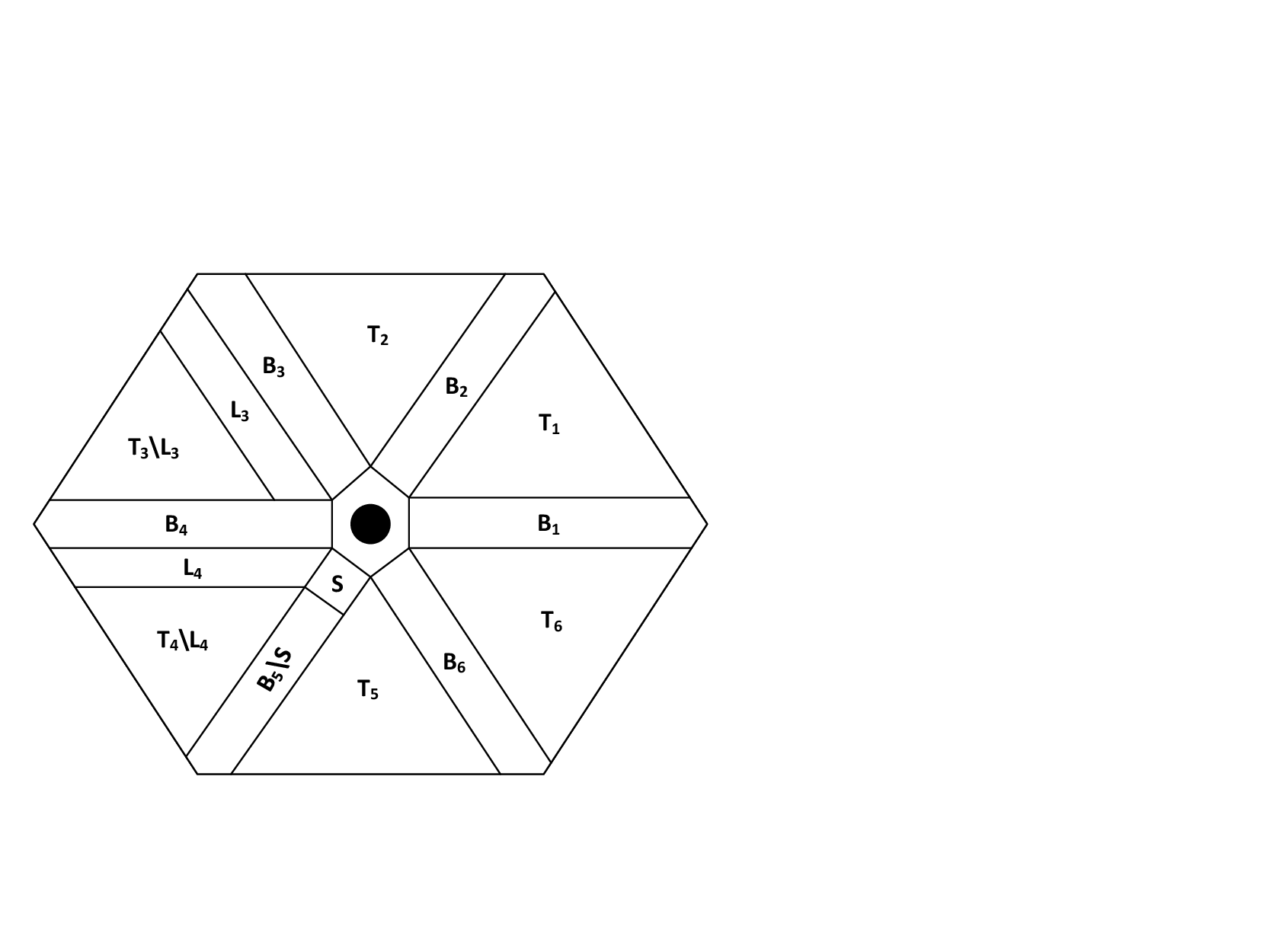}
\caption{IST partitions.}
\label{ISTPartitions}
\end{figure}

This partitioning is helpful in finding the Node-Independent Spanning Trees described in the following section.

\subsection{Tree Construction\label{ISTTreeConstruction}}
Similar to Section \ref{sec:EDNIST}, the node independent spanning trees can be constructed based on Table \ref{ParentChildIST}, which provides the parent and child nodes for a given node in a certain set.

\begin{example}
Given EJ network generated by $\alpha = 4+5\rho$ and a node $v=4\rho$. For the first spanning tree, since $v \in B_2$, then its parent is node $4\rho-1$ and it has no child.
\end{example}

\begin{lemma}
\label{IST_number_of_trees}
Let $ST$ be a set of node independent spanning trees in $EJ$ network generated by $\alpha = a+b\rho$, where $b = a+1$, then $|ST| \leq 6$.
\end{lemma}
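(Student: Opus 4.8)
The plan is to mirror the counting argument used for the edge-disjoint case in Lemma \ref{EDNIST_number_of_trees}, but now the binding constraint is node-independence (vertex-disjointness of the root-to-vertex paths away from the endpoints) rather than edge-disjointness. First I would recall the basic parameters: in the EJ network generated by $\alpha = a+b\rho$ with $b = a+1$ we have $N(\alpha) = 3a^2 + 3a + 1$ vertices, and since the network is $6$-regular, the root $r$ has exactly six neighbors. The key observation is that in any spanning tree rooted at $r$, the path $P_{ST_j}(r,u)$ from $r$ to any vertex $u \neq r$ begins by leaving $r$ along one of the edges incident to $r$; call the first such vertex the \emph{port} of tree $j$.

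The main step is to argue that node-independence forces distinct trees to use distinct ports. Suppose two trees $ST_p$ and $ST_q$ with $p \neq q$ both leave $r$ through the same neighbor $w$. Then $w$ is an internal vertex of both $P_{ST_p}(r,w')$ and $P_{ST_q}(r,w')$ for any $w'$ that lies strictly beyond $w$ in either tree — and since each tree is spanning and has more than two vertices, such a $w'$ exists (indeed $w$ itself has $ST$-children in at least one of the two trees, or if not, one can take any vertex whose tree-path passes through $w$). More carefully: take any vertex $u$ at tree-distance $\geq 2$ from $r$ in $ST_p$ whose path passes through $w$; then $w \in P_{ST_p}(r,u)\setminus\{r,u\}$, and if the same $w$ is also the port of $ST_q$, then on the path $P_{ST_q}(r,w)$ the vertex $w$ is the endpoint, so this particular intersection is not yet a contradiction — so instead I would pick $u$ to be a common descendant, or simply invoke that a vertex of degree $6$ can have at most $6$ edge-disjoint (hence at most $6$ internally-disjoint) paths emanating from it, and each independent spanning tree contributes one such edge at $r$. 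Since the six edges at $r$ are the only way to leave $r$, and the collection $\{P_{ST_j}(r,u)\}_j$ must be internally disjoint for every fixed $u$, no two trees can share the first edge; hence there are at most $6$ trees.

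Concretely, the cleanest phrasing: fix any vertex $u$ adjacent to $r$; for each tree $ST_j$, the path $P_{ST_j}(r,u)$ is either the single edge $(r,u)$ or a path whose second vertex is some neighbor $w_j \neq u$ of $r$. At most one tree can use the single edge $(r,u)$, and for the rest the second vertices $w_j$ must all be distinct (two equal $w_j$'s would put $w_j$ in the interior of two of the paths to $u$, violating independence); since there are only $6$ neighbors of $r$, we get $|ST| \leq 6$. The routine part is checking the endpoint edge cases; the step I expect to be the main obstacle is stating the independence-forces-distinct-ports argument so that it is airtight at the endpoints (when $u$ is itself a neighbor of $r$), which is why I would run the argument at a neighbor $u$ of $r$ and separate out the at-most-one tree that connects $r$ to $u$ directly. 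This gives $|ST| \leq 6$, matching the degree of the EJ network, and is consistent with Lemma \ref{EDNIST_number_of_trees} since edge-disjointness is a weaker requirement than node-independence for trees sharing a common root only through the degree bound.
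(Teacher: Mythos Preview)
Your approach is different from the paper's and, in spirit, cleaner: the paper simply doubles the edge-counting bound from Lemma~\ref{EDNIST_number_of_trees}, arguing that each undirected edge can be used as two directed edges so that $|ST| \le 2|ST_{ED}| \le 6$. That argument is really a statement about the authors' specific construction (which they later show uses each directed edge at most once) rather than a general upper bound on arbitrary node-independent families. Your degree-of-the-root argument is the standard and correct route to the general bound.

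However, your ``cleanest phrasing'' has a genuine gap. The claim that \emph{at most one} tree can use the direct edge $(r,u)$ is false: two node-independent spanning trees may both contain the edge $(r,u)$ provided $u$ is a leaf in each, since then both paths $P_{ST_p}(r,u)$ and $P_{ST_q}(r,u)$ have empty interior and the independence condition is vacuous. So the count ``$1 + 5 = 6$'' does not go through as written. The fix is the one you almost reached earlier in your proposal before retreating from it: run the argument at a vertex $u$ with $D(r,u)\ge 2$ (such a vertex exists whenever $a\ge 2$). Then every path $P_{ST_j}(r,u)$ has length at least $2$, its second vertex $w_j$ is a neighbor of $r$ and an \emph{internal} vertex of the path, and pairwise independence forces the $w_j$ to be distinct; since $r$ has exactly six neighbors, $|ST|\le 6$. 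For the boundary case $a=1$ the network is $K_7$, and the same pigeonhole on the six neighbors of $r$ applied to any vertex $v\neq u$ handles it. With this adjustment your argument is correct and more transparent than the paper's.
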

\begin{proof}
Following Lemma \ref{EDNIST_number_of_trees}, and since the edges are not disjoint then the directed edges are used to construct the trees instead of undirected edges.
That is, using each undirected edge twice (in both directions) to construct two different trees that are not necessarily edge disjoint we get $2 |ST_{ED}| = |ST| \leq 6$.
\end{proof}

\begin{table}[h]
\centering
\caption{Parent and child nodes for IST}
\label{ParentChildIST}
\begin{tabular}{|c|c|c|}
\hline
Set                             & Parent       & Child                            \\ \hline
$B_1$                           & $\rho^{t+2}$ & $\rho^{t-1}, \rho^t, \rho^{t+4}$ \\ \hline
$B_2 \cup B_6$                  & $\rho^{t+2}$ & --                               \\ \hline
$B_3 \cup T_2 \cup T_5$         & $\rho^{t+2}$ & $\rho^{t-1}$                     \\ \hline
$L_3$                           & $\rho^{t+1}$ & $\rho^{t-1}, \rho^{t+4}$         \\ \hline
$(T_3 \backslash L_3) \cup T_6$ & $\rho^{t+1}$ & $\rho^{t+4}$                     \\ \hline
$B_4$                           & $\rho^{t+1}$ & --                               \\ \hline
$L_4$                           & $\rho^{t+3}$ & --                               \\ \hline
$T_1 \cup (T_4 \backslash L_4)$ & $\rho^{t+3}$ & $\rho^t$                         \\ \hline
$(B_5 \backslash S)$            & $\rho^{t+3}$ & $\rho^{t-1}, \rho^t$             \\ \hline
$S$                             & $\rho^{t+3}$ & $\rho^{t-1}$                     \\ \hline
\end{tabular}
\end{table}

\begin{lemma}
\label{ISTisConnected}
The first node independent spanning tree is connected.
\end{lemma}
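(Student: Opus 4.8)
The plan is to mirror the proof strategy already used for Lemma \ref{EDNISTisConnected} in the edge-disjoint case, adapting it to the IST partition of Section \ref{ISTNetworkPartitions} and the parent/child assignment of Table \ref{ParentChildIST}. First I would fix the first tree by taking $c = 0$ (so $j = d$) and let $ST_1(V_1,E_1)$ denote the subgraph of $EJ_\alpha$ whose edges are exactly the parent edges prescribed by Table \ref{ParentChildIST}. By Lemmas \ref{IST_subsets_are_disjoint} and \ref{IST_subsets_allnodes}, every node other than $0$ lies in exactly one of the listed subsets, so the table assigns each such node a unique parent; hence $ST_1$ has exactly $|V|-1$ directed parent edges and every non-root node has out-degree one toward its parent. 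A graph on $|V|$ vertices with $|V|-1$ edges in which every vertex except one has a designated parent is a tree if and only if it is acyclic (equivalently, connected), so it suffices to show that following parents repeatedly from any node eventually reaches $0$.

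The key step is therefore to exhibit, for each node $D = x\rho^{j-1} + y\rho^j$ in each partition class, an explicit finite directed path of parent pointers from $D$ back to $0$ — exactly as Table \ref{EDNISTPaths} does in the edge-disjoint setting. I would build the analogous ``IST paths'' table: for a node on a boundary segment $B_d$ the parent chain walks monotonically inward along the appropriate $\rho$-direction until it hits $0$; for a node in a triangular region $T_d$ (or the trimmed versions $T_3\setminus L_3$, $T_4\setminus L_4$, and the line segments $L_3, L_4, S, B_5\setminus S$) the parent chain first reduces the $y$-coordinate to reach the relevant boundary segment and then follows the boundary inward. Concretely one checks, using $\rho^2 = \rho-1$ and $\rho^3 = -1$, that each parent rule in Table \ref{ParentChildIST} strictly decreases a suitable monovariant — e.g. $|x|+|y|$ together with a tie-break on which region the node sits in — so no cycles can occur and every chain terminates at $0$. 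Combined with the edge count $|E_1| = |V_1|-1$ from Lemma \ref{IST_number_of_trees} (the tree uses directed edges, one per non-root node), acyclicity plus the right number of edges forces connectivity, and $ST_1$ is a spanning tree.

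The main obstacle I expect is the bookkeeping at the ``seams'' of the partition: the special singleton $S$, the trimmed segments $B_5\setminus S$, $L_3$, $L_4$, and especially the wraparound edges, where a parent pointer such as $\rho^{t+3}$ must be interpreted modulo $\alpha$ and may land in a different partition class than naive coordinates suggest (cf. Examples \ref{ex_EJ34_wraparound}). I would handle this exactly as the excerpt does for the EDNIST paths — by rewriting the affected nodes in the alternate forms $x+y\rho$ or $x+y\rho^2$ before reading off the parent, so that the monovariant argument still applies — and then verify the handful of boundary cases directly. Once the first tree is shown connected, Lemma \ref{IST_subsets_are_disjoint}, Lemma \ref{IST_subsets_allnodes}, the network's vertex-transitivity, and the substitution $c = t-1$ will let the remaining five trees be obtained by rotation, just as Lemma \ref{EDNISTrotating} does in the edge-disjoint case.
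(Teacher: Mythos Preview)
Your proposal is correct and mirrors the paper's own argument essentially exactly: the paper fixes the first tree, invokes the edge count $|E_1|=|V_1|-1$ from Lemma~\ref{IST_number_of_trees}, and then exhibits explicit paths from $0$ to every node via a table (Table~\ref{ISTPaths}) built just as you describe, concluding connectivity. Your additional remarks on monovariants and wraparound bookkeeping are more careful than the paper's terse treatment but do not change the underlying strategy.
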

\begin{proof}
Based on Section \ref{ISTNetworkPartitions}, consider the $j$ values with $c = t-1$, for $t=2,3,4,5,6$. Let $ST_1(V_1,E_1)$ represents the first node independent spanning tree where $V_1 \subseteq V$ and $E_1 \subseteq E$ are the set of nodes and edges in $ST_1$, respectively.
Based on Lemma \ref{IST_number_of_trees}, we get $|E_1| = 3a^2 + 3a = |V_1| - 1$. Further, Table \ref{ISTPaths} shows the path
from the source node $S = 0$ to all other nodes in the network using tree $ST_1$. As it is noted in Table \ref{ISTPaths}, the paths are described by a word on the alphabet $\{-1, 1, -\rho, \rho, -\rho^2, \rho^2\}$ where the symbols denote the direction of the edges to be passed. The number of steps are represented as $(direction)^{steps}$. We conclude that $ST_1$ is connected.
\end{proof}

\begin{example}
In the first spanning tree, let $S = 0$ and $D = \rho^4 + 3\rho^5$ (which is $D = 3-4\rho$) where $x = 3$ and $y = -4$, then $D \in \{B_5 \backslash S\} \cup S \cup T_5 \cup B_6$. Thus, the steps are $(1)^4 (\rho)^1 (1)^3$. That is, $D$ can be reached by going $4$ steps along direction 1, then $1$ step along  direction $\rho$, and finally $3$ steps along direction 1. 
\end{example}

\begin{lemma}
\label{ISTrotating}
The second, third, forth, fifth, and sixth node independent spanning trees can be obtained by rotating the first node independent spanning tree.
\end{lemma}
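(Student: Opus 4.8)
**Proof proposal for Lemma \ref{ISTrotating} (the second through sixth node-independent spanning trees can be obtained by rotating the first).**

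The plan is to exploit the rotational symmetry of the EJ network together with the fact that the construction in Table \ref{ParentChildIST} and the network partition in Section \ref{ISTNetworkPartitions} are both parametrized uniformly in the tree index $t$ through the single offset $c = t-1$ (with all powers of $\rho$ read modulo $6$). The key observation is that the map $\phi: z \mapsto \rho z$ is a graph automorphism of $EJ_\alpha$: since $z - w \equiv \pm 1, \pm\rho, \pm\rho^2 \pmod{\alpha}$ if and only if $\rho z - \rho w \equiv \pm\rho, \pm\rho^2, \pm\rho^3 = \mp 1 \pmod{\alpha}$, multiplication by $\rho$ permutes the six edge directions cyclically and fixes node $0$. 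Consequently $\phi^{t-1}$ is an automorphism fixing the root $S = 0$.

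First I would verify that applying $\phi^{t-1}$ to the partition sets of tree $1$ yields exactly the partition sets used for tree $t$: a node $x\rho^{j-1} + y\rho^{j}$ in a set with parameter $j = d + 0$ is sent by $\phi^{t-1}$ to $x\rho^{j-1+(t-1)} + y\rho^{j+(t-1)}$, i.e. the corresponding set with $j = d + (t-1) = d + c$, which is precisely the indexing used in Section \ref{ISTNetworkPartitions} for tree $t$. Second, I would check that the parent/child assignment in Table \ref{ParentChildIST} commutes with $\phi$: since every entry in that table is written as $\rho^{t+m}$ for a fixed shift $m \in \{-1,0,2,3,4\}$ relative to the set parameter, and $\phi$ advances every exponent by $1$, the image under $\phi^{t-1}$ of the tree-$1$ parent/child structure is exactly the tree-$t$ structure read off the same table with $c = t-1$. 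Hence $ST_t = \phi^{t-1}(ST_1)$ as rooted subgraphs.

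Third, to conclude that each $ST_t$ is a spanning tree it suffices, by Lemmas \ref{IST_subsets_are_disjoint} and \ref{IST_subsets_allnodes} (which hold for every $c$, so the sets for tree $t$ still partition $V$) together with the edge count $|E_t| = |V| - 1$ from Lemma \ref{IST_number_of_trees}, to show each $ST_t$ is connected; but connectivity is preserved by the automorphism $\phi^{t-1}$, so this follows from Lemma \ref{ISTisConnected} exactly as in the proof of Lemma \ref{EDNISTrotating} — equivalently, one reruns the argument of Lemma \ref{ISTisConnected} with the shifted $j$ values $j = d + c$, $c = 1,\dots,5$, and Table \ref{ISTPaths} then exhibits the root-to-node paths in each rotated tree. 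I expect the main obstacle to be the bookkeeping in the second step: one must confirm that the wraparound (modulo $\alpha$) cases in the partition definitions are genuinely respected by $\phi$, i.e. that $\phi$ maps boundary nodes to boundary nodes and the special singleton/difference sets ($S$, $B_5\setminus S$, $L_3$, $L_4$, $T_3\setminus L_3$, $T_4\setminus L_4$) to their tree-$t$ counterparts without collision; this is where the uniform parametrization in $c$ and the modulo-$6$ convention on exponents do the real work, and it should be stated carefully rather than left implicit.
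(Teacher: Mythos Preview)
Your proposal is correct and follows essentially the same approach as the paper: invoke the rotational symmetry of $EJ_\alpha$, use the uniform parametrization $c=t-1$ of the partition in Section~\ref{ISTNetworkPartitions}, and rerun the connectivity argument of Lemma~\ref{ISTisConnected} with the shifted $j$-values. Your write-up is in fact considerably more explicit than the paper's (which simply says ``since the network is symmetric \dots\ follow Lemma~\ref{ISTisConnected} with $c=t-1$''); in particular, naming the automorphism $\phi:z\mapsto\rho z$ and checking that it carries each partition set and each parent/child rule to its tree-$t$ counterpart makes precise what the paper leaves implicit.
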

\begin{proof}
Based on Lemmas \ref{IST_subsets_are_disjoint} and \ref{IST_subsets_allnodes}, and Table \ref{ParentChildIST}, since the network is symmetric then it is sufficient to prove that the obtained second, third, forth, fifth, and sixth node independent spanning trees are connected by following Lemma \ref{ISTisConnected}, but with different $j$ values with $c = t-1$ for $t = 2,3,4,5,6$ as described in Section \ref{ISTNetworkPartitions}.
\end{proof}

\begin{theorem}
\label{IST_tree_connected}
$ST_t$, for $t = 1, 2, 3, 4, 5, 6$, are node independent spanning trees.
\end{theorem}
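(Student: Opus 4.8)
The plan is to establish Theorem~\ref{IST_tree_connected} by combining the structural lemmas that have already been proved for the first tree with the symmetry of the EJ network. First I would invoke Lemmas~\ref{IST_subsets_are_disjoint} and~\ref{IST_subsets_allnodes} to note that the partition underlying Table~\ref{ParentChildIST} is a genuine partition of $V$, so the parent function in that table is total and well-defined on $V \setminus \{0\}$; this guarantees that each $ST_t$ is a spanning subgraph in which every non-root node has exactly one parent, hence has exactly $|V|-1$ edges. Next I would cite Lemmas~\ref{ISTisConnected} and~\ref{ISTrotating}: the first shows $ST_1$ is connected (a connected subgraph on $|V|$ vertices with $|V|-1$ edges is a tree), and the second lifts this to all six trees via the rotation $\rho^{c}$ with $c = t-1$, using node-transitivity and $6$-regularity of $EJ_\alpha$. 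At that point each $ST_t$ is established to be a spanning tree rooted at $0$.

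The remaining and genuinely substantive step is to verify \emph{independence}: for every $u \in V \setminus \{0\}$, the six paths $P_{ST_t}(0,u)$ must be internally vertex-disjoint. The approach I would take is to read these paths off Table~\ref{ISTPaths}, where each path from $0$ to a target $D = x\rho^{j-1}+y\rho^j$ is given as an explicit word over $\{\pm 1,\pm\rho,\pm\rho^2\}$, and then exploit the rotational structure: since $ST_t$ is obtained from $ST_1$ by multiplying all node labels by $\rho^{t-1}$, the path from $0$ to $u$ in $ST_t$ is $\rho^{t-1}$ applied to the $ST_1$-path from $0$ to $\rho^{-(t-1)}u$. So it suffices to fix $u$, compute which set each of $u, \rho^{-1}u, \dots, \rho^{-5}u$ falls into, write down the corresponding six words from Table~\ref{ISTPaths}, and check that their interior vertex sets are pairwise disjoint. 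Because the sets $B_d, T_d, L_d, S$ occupy the six "sectors" of the hexagon cyclically, the six paths leave $0$ along six different edge-directions and stay in six disjoint sectors (up to the controlled exceptions $S$, $L_3$, $L_4$ near the sector boundaries), which is exactly why the construction was engineered with those special subsets.

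I expect the main obstacle to be precisely the boundary bookkeeping: the paths in Table~\ref{ISTPaths} are grouped into unions of sets (e.g.\ $\{B_5\setminus S\}\cup S\cup T_5\cup B_6$) and some of them are described "after converting to the form $x+y\rho^2$", so one must be careful that a given target $D$ lies in exactly one group for each $t$, that the wraparound (mod $\alpha$) representatives are chosen consistently, and that the "turning" vertices where a path changes direction do not collide with a straight segment of a sibling path. The cleanest way to discharge this is a finite case analysis over the residue $t-1 \bmod 6$ together with the sign/zero pattern of $(x,y)$, checking in each case that the at most two "corner" vertices of one path avoid the lattice-aligned segments of the other five; since all six trees are rotations of one, this reduces to $O(1)$ cases. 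I would therefore conclude: by Lemmas~\ref{IST_subsets_are_disjoint}--\ref{ISTrotating}, Table~\ref{ParentChildIST}, and the path description in Table~\ref{ISTPaths}, for each $u\in V$ we have $\bigcap_{t=1}^{6}\bigl(P_{ST_t}(0,u)\setminus\{0,u\}\bigr)=\phi$, and by node-transitivity the same holds with any root $r$ in place of $0$; hence $ST_1,\dots,ST_6$ are node-independent spanning trees. $\qed$
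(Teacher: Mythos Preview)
Your plan cites exactly the same ingredients as the paper (Lemmas~\ref{IST_subsets_are_disjoint}--\ref{ISTrotating}, Tables~\ref{ParentChildIST} and~\ref{ISTPaths}), and your argument that each $ST_t$ is a spanning tree matches the paper's. Where you diverge is in the independence step. The paper does \emph{not} carry out the sector-by-sector path comparison you outline; instead it observes, in one line, that the \emph{directed} edge sets are pairwise disjoint---$ST_t(E)\cap ST_{t'}(E)=\phi$ for $t\neq t'$, i.e.\ each directed edge of $EJ_\alpha$ occurs in exactly one of the six trees---and declares node-independence from that. So the paper's route is an edge-disjointness count (there are $6(3a^2+3a)$ directed edges and six trees each using $3a^2+3a$ of them), whereas yours is an explicit vertex-disjointness case analysis over the groups of Table~\ref{ISTPaths}.

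What each buys: the paper's argument is short and uniform, but it leaves implicit the step from ``pairwise disjoint directed edges'' to ``internally disjoint root--to--$u$ paths'' (the latter does not follow from the former in general; it relies here on the specific fact that in every $ST_t$ all root-paths begin along the single edge $(0,\rho^{t-1})$, so the six first steps already separate). Your approach is heavier on bookkeeping but addresses the vertex condition directly, and the rotational reduction you describe is the right way to keep the case analysis finite. One small mismatch to be aware of: the paper's definition of independence only requires the \emph{common} intersection $\bigcap_{t=1}^{6}(P_{ST_t}(0,u)\setminus\{0,u\})$ to be empty, not pairwise disjointness, so your proposed verification actually proves something strictly stronger than what the theorem (as stated in this paper) demands.
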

\begin{proof}
Based on Lemmas \ref{IST_subsets_are_disjoint}-\ref{ISTrotating}, and Tables \ref{ParentChildIST} and \ref{ISTPaths},
let $ST_t(E)$ be the set of directed edges for the spanning tree $t$.
Thus, we get $ST_t(E) \cap ST_{t'}(E) = \phi, t, t' \in \{1, 2, 3, 4, 5, 6\}, t \neq t'$. That is, each directed edge is used once among all trees.
We conclude that all trees are node independent spanning trees.
\end{proof}

\begin{table}[h]
\centering
\caption{Steps from node $S=0$ to all other nodes $D=x\rho^{j-1}+y\rho^j$, where $k$ is the diameter}
\label{ISTPaths}
\begin{tabular}{|c|c|c|}
\hline
Node in set                                                              & Path (steps)   \\ \hline
$B_1$                                                                      & $(1)^x$         \\ \hline
$T_1 \cup B_2$                                                       & $(1)^x (\rho)^y$         \\ \hline
$T_6$                                                                      & $(1)^x (-\rho^2)^y$         \\ \hline
$B_4 \cup \{T_3 \backslash L_3\} \cup L_3$           & $(1)^{k-|x|+1} (-\rho^2)^{k-y}$         \\ \hline
$B_3 \cup T_2$                                                       & $(1)^k (-\rho^2)^{k-y} (1)^{x+1}$         \\ \hline
$L_4 \cup \{T_4 \backslash L_4\}$                          & $(1)^{k-|x|} (\rho)^{k-y+1}$         \\ \hline
$\{B_5 \backslash S\} \cup S \cup T_5 \cup B_6$   & $(1)^k (\rho)^{k-|y|+1} (1)^x$         \\ \hline
\end{tabular}
\end{table}

\begin{lemma}
\label{IST_tree_depth}
The depth of all trees $ST_t$, for $t = 1, 2, 3, 4, 5, 6$, is $2k+1$.
\end{lemma}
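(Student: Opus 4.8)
The plan is to follow the pattern of Lemma~\ref{EDNIST_tree_depth}. By Lemma~\ref{ISTrotating} and the symmetry of the EJ network, the trees $ST_2,\dots,ST_6$ are rotated copies of $ST_1$, so it suffices to compute the depth of $ST_1$, i.e.\ the length of a longest path from the root $0$ to a vertex in $ST_1$. By Lemma~\ref{ISTisConnected} that path is, for each vertex, one of the seven words recorded in Table~\ref{ISTPaths}, so the depth of $ST_1$ is the largest word length occurring there as $x$ and $y$ range over the vertices of the sets attached to each row.

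First I would pin down, for each row, the admissible range of the exponents appearing in its word. These follow from the sector descriptions of Section~\ref{ISTNetworkPartitions} (a vertex of $B_d$ has a single nonzero coordinate in $\{1,\dots,k\}$; a vertex of $T_d$ has both coordinates at least $1$ with sum at most $k$; $L_3,L_4,S$ are the indicated boundary lines and the single vertex inside the triangles) together with $\rho^2=\rho-1$, $\rho^3=-1$, $\rho^4=-\rho$, $\rho^5=1-\rho$, $\rho^6=1$, which are what rewrite a vertex into the form its row uses. With those ranges in hand the length of each word is a one-line computation. The three short rows ($B_1$; $T_1\cup B_2$; $T_6$) give words of length at most $k$. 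The rows $B_4\cup(T_3\backslash L_3)\cup L_3$, $B_3\cup T_2$ and $L_4\cup(T_4\backslash L_4)$ have words whose $k-|x|$ and $k-y$ type exponents telescope against the coordinate constraints to a total at most $2k$. The remaining row, $\{B_5\backslash S\}\cup S\cup T_5\cup B_6$ with word $(1)^{k}(\rho)^{k-|y|+1}(1)^{x}$, has length $k+(k-|y|+1)+x=2k+1+(x-|y|)$; for the vertices of $B_5\backslash S$, $S$ and $T_5$ one has $x<|y|$, so this is at most $2k$, while for every vertex $x\rho^{5}=x-x\rho$ of $B_6$ one has $x=|y|$, so the length is exactly $2k+1$.

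Assembling the rows shows that no root-to-vertex path of $ST_1$ is longer than $2k+1$ and that every vertex of $B_6$ lies at distance $2k+1$ from the root in $ST_1$; hence the depth of $ST_1$ is exactly $2k+1$, and by the rotation argument the same value holds for $ST_2,\dots,ST_6$. Equivalently one could bound how many times the parent map of Table~\ref{ParentChildIST} can be iterated before reaching the root, but the path table makes the count cleaner.

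The step I expect to require the real work is the exponent bookkeeping of the second paragraph: for each row one must use precisely the coordinate frame in which Table~\ref{ISTPaths} is written (and in which Lemma~\ref{ISTisConnected} was checked), and confirm that none of the seven words exceeds $2k+1$. This is exactly the place where the edge-disjoint construction produced depth $2k+2$, attained at the far end of $B_5$ in Lemma~\ref{EDNIST_tree_depth}; so the crux is to see that splitting off the vertex $S$ and rerouting as in Table~\ref{ParentChildIST} shortens that one longest branch by a single edge while leaving every other branch of length at most $2k+1$.
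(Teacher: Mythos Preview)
Your approach matches the paper's: reduce to $ST_1$ via Lemma~\ref{ISTrotating} and read the maximum path length from Table~\ref{ISTPaths}, and your identification of $B_6$ as the depth-$(2k+1)$ locus agrees with the paper's nodes $-c\rho^2$. One caveat: the paper also lists the vertices $c\rho$ (all of $B_2$) at depth $2k+1$---tracing Table~\ref{ParentChildIST} shows the tree path to $c\rho$ runs through $T_2$, $B_3$, $L_3$ and a wraparound edge to $k\in B_1$, for total length $2k+1$, so the word $(1)^x(\rho)^y$ in the $T_1\cup B_2$ row of Table~\ref{ISTPaths} really only covers $T_1$ and your bound of $k$ there undercounts $B_2$; this does not affect the conclusion, since $B_6$ already witnesses the maximum and none of your row bounds exceeds $2k+1$.
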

\begin{proof}
The proof is provided for tree $ST_1$. The same proof can be applied to the other trees accordingly.
Based on Lemma \ref{IST_subsets_allnodes} and Table \ref{ISTPaths}, the longest path in tree $ST_1$ starting from node 0 is $2k+1$, which leads to nodes $c\rho$ or to nodes $-c\rho^2$, where $1 \leq c \leq k$.
\end{proof}

\begin{example}
Fig. \ref{6ISTs}(a), Fig. \ref{6ISTs}(b), Fig. \ref{6ISTs}(c), Fig. \ref{6ISTs}(d), Fig. \ref{6ISTs}(e), and Fig. \ref{6ISTs}(f) illustrate the first, second, third, fourth, fifth, and sixth node independent spanning trees in EJ network generated by $\alpha = 4+5\rho$, respectively.
\end{example}

\begin{figure*}[h]
\centering
\includegraphics[scale=0.20]{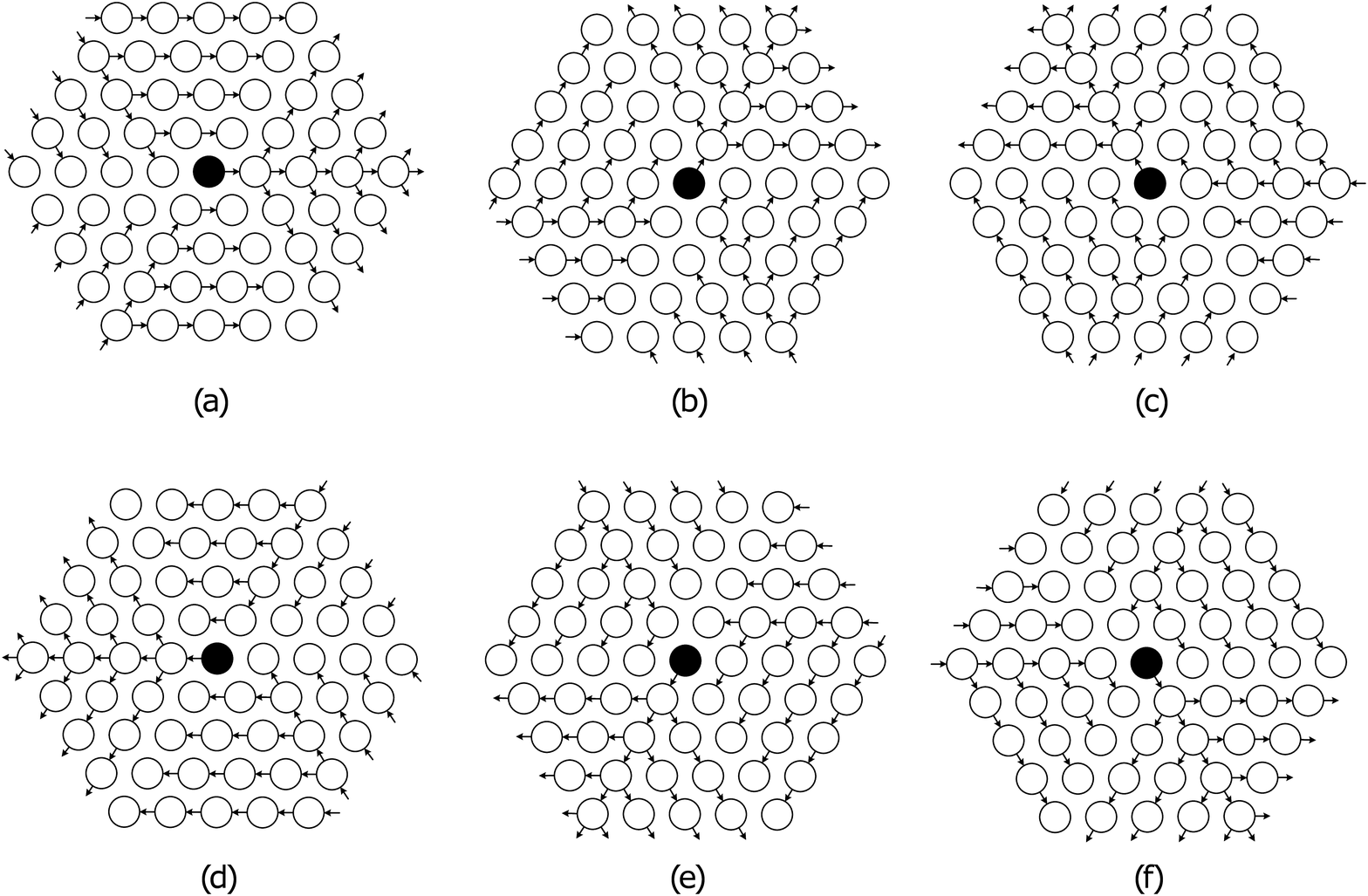}
\caption{First-sixth ISTs from (a) to (f), respectively, for EJ with $\alpha = 4+5\rho$.}
\label{6ISTs}
\end{figure*}

\section{Routing}
\label{sec:routing}
In this section, we present the algorithm used to rout the messages in the trees constructed in Sections \ref{sec:EDNIST} and \ref{sec:IST}.
The algorithm uses Tables \ref{EDNISTPaths} and \ref{ISTPaths} to determine the link in the current node to be used for sending/forwarding the messages.

Algorithm \ref{Init-Route} describes the procedures to be taken at the source node as follows. Since Tables \ref{EDNISTPaths} and \ref{ISTPaths} assume the source node is 0
and due to the symmetry of the network then, as stated in line 1, the given source node $S$ is mapped to node 0, and relatively, the destination node $D$ is also mapped. Line 2, obtains
the path sequence as tuples consisting of ($direction$, $steps$) based on the $Set$ that the destination node $D$ belongs to. The $direction$ represents the link to be used in the current
node to send/forward the message and the $steps$ is the number of hops along the given $direction$. In line 3, the first tuple is obtained to be used to send the message in line 4. The time complexity of this algorithm is $O(n)$, where $n$ is the total number of nodes in the network, since all the lines take constant time except the line 2, which needs to match the $D$ with its corresponding $Set$. The communication complexity is $O(1)$ since it only sends one message as stated in line 4.

\begin{algorithm}[H]
\caption{Init-Routing($S$, $D$, $\alpha$)}
\begin{algorithmic}[1]
\label{Init-Route}
\STATE Map $S$ to node 0 and $D$ accordingly
\STATE Lookup Table \ref{EDNISTPaths} or \ref{ISTPaths} such that $D \in Set$ to get the corresponding path $P$ consisting of a sequence of tuples $(direction, steps)$
\STATE $dir$:$steps$ = $P$.pop()
\STATE Send through link ($dir$) message Rout($dir$, $steps - 1$, $P$, $S + dir$, $D$, $\alpha$)
\end{algorithmic}
\end{algorithm}

In Algorithm \ref{Route}, Lines 1-4 checks whether the message has arrived to the destination node. Lines 5-7, checks whether the number of the steps is equal to 0. If
so, then it means that there are no more steps in current given direction. Thus, a tuple is obtained from the current path $P$ sequence where the remaining tuples will be
obtained later on. In line 8, the algorithm sends the message using link described in $direction$ and reduces the number of $steps$ by 1. The time complexity of this algorithm
is $O(1)$ since each line takes constant time. The communication complexity is $O(1)$ per node as stated in line 8.

\begin{algorithm}[H]
\caption{Routing($dir$, $steps$, $P$, $S$, $D$, $\alpha$)}
\begin{algorithmic}[1]
\label{Route}
\IF {($S = D \ mod \ \alpha $)}
\STATE Consume packet
\STATE Return
\ENDIF
\IF {($steps = 0$)}
\STATE $dir$:$steps$ = $P$.pop()
\ENDIF
\STATE Send through link ($dir$) message Rout($dir$, $steps - 1$, $P$, $S + dir$, $D$, $\alpha$)
\end{algorithmic}
\end{algorithm}

%The over all time complexity for the routing algorithm is $O(n) + O(k) = O(n+k)$ where $O(n)$ is for Algorithm \ref{Init-Route} and $O(k)$ is for Algorithm \ref{Route}, based on the depth of the tree, where $k$ is the diameter of the network. Further, the overall communication complexity is the number of steps required to deliver the message from $S$ to $D$, which is $1 + (2k+1) = 2k+2$ where $1$ is for Algorithm \ref{Init-Route} and $(2k+1)$ is for Algorithm \ref{Route} in EDNIST. Or, $1 + (2k) = 2k+1$ where $1$ is for Algorithm \ref{Init-Route} and $(2k)$ is for Algorithm \ref{Route} in IST. Thus, the total communication complexity is at most $2k+2$ in EDNIST, or $2k+1$ in IST, which both are the depth of the corresponding trees.

The following example illustrates the usage of the routing algorithm.

\begin{example}
\label{exampleRouting}
Let the source node be $S = 0$ and the destination node be $D = -2-\rho^2$ in EJ network generated by $\alpha = 4+5\rho$. We get $k = 4$, $x = -2$, and $y = -1$.
Based on Algorithm \ref{Init-Route}, no need to map $S$ because it is 0 and we obtain path $P = \{(1,3),(-\rho^2,3)\}$ since $D \in Set = T_3 \{B_4 \backslash S_4\} \cup S_4$.
The $dir$ is set to $1$ and the $steps$ is set to $3$ by calling $P.pop()$, which results $P = \{(-\rho^2,3)\}$. After that, based on Algorithm \ref{Init-Route}, the source
node $S$ sends the message $Route(1, 2, P, 1, -2-\rho^2, \alpha)$ through link $1$ to node $1$. Node $1$ applies the line 8 in Algorithm \ref{Route} and continue sending the
message $Route(1, 1, P, 2, -2-\rho^2, \alpha)$ to node $2$ via link $1$.
Node $2$ applies the line 8 in Algorithm \ref{Route} and continue sending the
message $Route(1, 0, P, 3, -2-\rho^2, \alpha)$ to node $3$ via link $1$.
At node $3$, since the $steps = 0 $ then it gets the next tuple by calling $P.pop()$ and sets $dir$ to $-\rho^2$ and $steps$ to $3$, after that it continue
sending the message $Route(-\rho^2, 2, P, 3-\rho^2, -2-\rho^2, \alpha)$ to node $3-\rho^2$ via link $-\rho^2$.
The receiving node $3-\rho^2$ applies the line 8 in Algorithm \ref{Route} and continue sending the
message $Route(-\rho^2, 1, P, 3-2\rho^2, -2-\rho^2, \alpha)$ to node $3-2\rho^2$ via link $-\rho^2$.
The receiving node $3-2\rho^2$ applies the line 8 in Algorithm \ref{Route} and continue sending the
message $Route(-\rho^2, 0, P, 3-3\rho^2, -2-\rho^2, \alpha)$ to node $3-3\rho^2$ via link $-\rho^2$.
Finally, the receiving node $3-3\rho^2$ observes that $S = 3-3\rho^2 \equiv -2-\rho^2 = D \ mod \ \alpha $ and receives the message.
\end{example}

\section{Experimental Results\label{sec:experimentalResults}}
In this section, we discuss the simulation results. We have used a Python network simulator called NetworkX \cite{hagberg2005networkx,hagberg2008exploring} in our
implementation. It is a package used to represent and analyze the networks and the algorithms used in the networks. In our simulation, we assumed that each node can send and receive messages simultaneously to all its neighbors.

Based on Section \ref{sec:IST}, the algorithm always constructs 6 trees where the maximum number of steps required to construct the trees is $2k+2$. Additionally, we measured the average of maximum communication steps between the root node and all other nodes in the network among all trees with the following cases: (1) no faulty node, (2) one faulty node, (3) two faulty nodes, (4) three faulty nodes, (5) four faulty nodes, and (6) five faulty nodes. We did not measure beyond 5 faulty nodes since in the worst case the root node will be pruned from the trees if all of its neighbors are faulty, and there will be no path to other nodes that can be used to measure the efficiency of the communications. That is, the root node will isolated from the network if all its neighboring nodes are faulty.

The network sizes selected in the simulation are when $\alpha = 1+2\rho$, $\alpha = 2+3\rho$, $\alpha = 3+4\rho$, $\alpha = 4+5\rho$, $\alpha = 5+6\rho$, $\alpha = 6+7\rho$, $\alpha = 7+8\rho$, $\alpha = 8+9\rho$, and $\alpha = 9+10\rho$. The results of the simulations are illustrated, in respective order, in Tables \ref{avgMaxStepsAllPort} and \ref{maxStepsAllPorts}. Both tables are represented in Figures \ref{avgMaxStepsAllPortChart} and \ref{maxMaxStepsAllPortChart}, respectively. Some values are omitted due to the hardware resource limitations. Table \ref{avgMaxStepsAllPort} shows the average maximum number of communication steps in all IST using all ports. Whereas, Table \ref{maxStepsAllPorts} shows the maximum of all maximums number of communication steps in all IST using all ports. It is observable that the simulation results are consistent with the discussions in Section \ref{sec:IST} where the results are bounded by the lower and upper bounds. The lower bound is $k+1$, whereas, the upper bound is $2k+2$ which is equal to the tree $depth - 1$. That is, one more step is counted when the last node is trying to communicate to its neighboring nodes.

The simulation measures the required number of communication steps to reach each destination node $D$ from the source node $S = 0$ in the network with no faulty node. In one faulty node, we run the simulation $n$ times, where $n$ is the total number of nodes in the network, and in each run we take one node down then we measure the required number of communication steps to reach the destination node. That is, In case of one faulty node, for each network size, we measured the maximum number of communication steps required to reach each node in the network from the root node with all one node fault possibilities and then we obtain the average and the maximum of the steps. The same simulation applied for the cases when all possibilities of 2, 3, 4, and 5 faulty nodes are present in each network size.
 
\begin{table}[H]
\centering
\caption{Average maximum number of steps to construct all trees using all ports.}
\label{avgMaxStepsAllPort}
\resizebox{\columnwidth}{!}{
\begin{tabular}{|c|c|c|c|c|c|c|c|c|c|}
\hline
$\alpha$  & 1+2$\rho$ & 2+3$\rho$ & 3+4$\rho$ & 4+5$\rho$ & 5+6$\rho$ & 6+7$\rho$ & 7+8$\rho$ & 8+9$\rho$ & 9+10$\rho$  \\ \hline
Lower Bound & 2   & 3            & 4           & 5           & 6            & 7           & 8        & 9            & 10          \\ \hline
No Faulty & 2        & 3            & 4           & 5           & 6             & 7          & 8        & 9            & 10          \\ \hline
1 Faulty  & 2         & 3.333     & 4.5         & 5.6        & 6.666     & 7.714   & 8.75    & 9.777    & 10.8        \\ \hline
2 Faulty  & 2         & 3.529     & 4.852     & 6.061    & 7.208     & 8.32     & 9.409  & 10.481  & 11.542      \\ \hline
3 Faulty  & 2         & 3.649     & 5.105     & 6.417    & 7.648     & 8.831   & 9.98    & 11.107  &             \\ \hline
4 Faulty  & 2         & 3.765     & 5.314     & 6.71      & 8.017     & 9.266   &            &              &             \\ \hline
5 Faulty  & 2         & 3.899     & 5.512     & 6.971    & 8.339     &             &            &              &             \\ \hline
Upper Bound & 4  & 6            & 8            & 10         & 12            & 14        & 16       & 18      & 20          \\ \hline
\end{tabular}
}
\end{table}

\begin{table}[H]
\centering
\caption{Maximum of all maximums number of steps to construct all trees using all ports.}
\label{maxStepsAllPorts}
\resizebox{\columnwidth}{!}{
\begin{tabular}{|c|c|c|c|c|c|c|c|c|c|}
\hline
$\alpha$  & 1+2$\rho$ & 2+3$\rho$ & 3+4$\rho$ & 4+5$\rho$ & 5+6$\rho$ & 6+7$\rho$ & 7+8$\rho$ & 8+9$\rho$ & 9+10$\rho$ \\ \hline
Lower Bound & 2   & 3         & 4         & 5         & 6       & 7        & 8         & 9        & 10          \\ \hline
No Faulty & 2        & 3         & 4         & 5         & 6       & 7         & 8        & 9        & 10         \\ \hline
1 Faulty  & 2         & 4         & 6         & 8         & 10      & 12      & 14      & 16      & 18         \\ \hline
2 Faulty  & 2         & 4         & 6         & 8         & 10      & 12      & 14      & 16      & 18         \\ \hline
3 Faulty  & 2         & 4         & 6         & 8         & 10      & 12      & 14      & 16      &            \\ \hline
4 Faulty  & 2         & 6         & 8         & 10       & 12      & 14      &           &           &            \\ \hline
5 Faulty  & 2         & 6         & 8         & 10       & 12      &           &           &           &            \\ \hline
Upper Bound & 4  & 6         & 8         & 10       & 12      & 14      & 16       & 18      & 20          \\ \hline
\end{tabular}
}
\end{table}

\begin{figure}[h]
\centering
\includegraphics[scale=0.50]{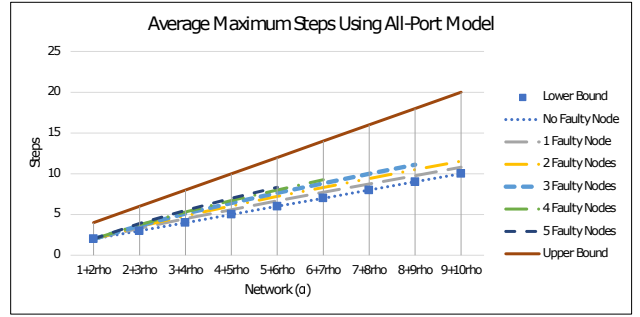}
\caption{Average maximum steps using all-port model.}
\label{avgMaxStepsAllPortChart}
\end{figure}

\begin{figure}[h]
\centering
\includegraphics[scale=0.50]{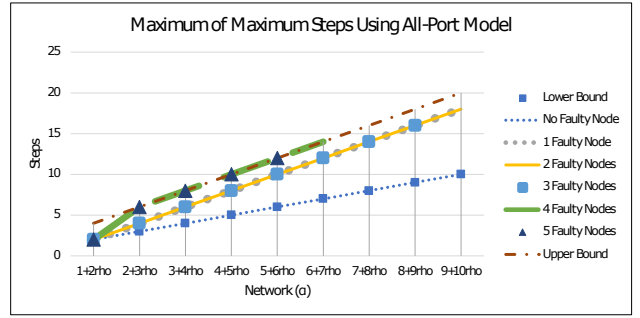}
\caption{Maximum of maximum steps using all-port model.}
\label{maxMaxStepsAllPortChart}
\end{figure}

\section{Spanning Trees in Higher Dimensional EJ Networks\label{sec:STinHigherEJ}}
In this section, we apply the proposed work on higher dimensional EJ networks \cite{hussain2017higher} to obtain the spanning trees.
The higher dimensional EJ networks are explained in Subsection \ref{sec:HigherEJ}. In Subsection \ref{sec:STHEJ}, we study the spanning trees  in higher dimensional
EJ networks.

\subsection{Higher Dimensional EJ Networks\label{sec:HigherEJ}}
The higher dimensional EJ network \cite{hussain2017higher} is denoted as $EJ_{\alpha}^{(n)}$ and it is based on the cross product between the lower dimensional EJ networks.
That is, $EJ_{\alpha}^{(n)} = EJ_{\alpha} \otimes EJ_{\alpha}^{(n-1)}$, which is $EJ_{\alpha}$ cross product itself $n$ times, where $n$ is known as the number of dimensions.
In this paper, we strict $\alpha$ to be dense, i.e., $\alpha = a + b\rho \in \mathbb{Z}[\rho]$ where $b = a+1$, and the $\alpha$ of all dimensions are not necessarily equal, i.e., same network sizes.

The result of the cross product between any two graphs $G_1(V_1,E_1)$ and $G_2(V_2,E_2)$ is $G(V,E)$. Then, $G(V,E)$ can be written as
$G_1 \times G_2$ where $V = \{(u, v) | u \in V_1, v \in V_2\}$ and
$E = \{((u_1, v_1), (u_2, v_2)) | ((u_1, u_2) \in E_1 \ and \ v_1 = v_2) \ or \ ((v_1, v_2) \in E_2 \ and \ u_1 = u_2)\}$.

The norm of $EJ_{\alpha}^{(n)}$ is $N(\alpha)^n$, which is the total number of nodes in $EJ_{\alpha}$ network power of $n$.
To address the nodes in $EJ_{\alpha}^{(n)}$, a set of $n$-tuples with coordinates in EJ is used, from the highest to the lowest dimensions.
That is, a node $(x_n + y_n\rho, x_{n-1} + y_{n-1}\rho, \dots, x_1 + y_1\rho)$ is located in the positions $x_n + y_n\rho$ on the first layer (highest or nth-dimension)
of $EJ_{\alpha}^{(n)}$, $x_{n-1} + y_{n-1}\rho$ on the second layer of $EJ_{\alpha}^{(n)}$, and so on until $x_1 + y_1\rho$ on the last layer (lowest or 1st-dimension) of $EJ_{\alpha}^{(n)}$.
In $EJ_{\alpha}^{(n)}$, each node has degree of $6n$. the network $EJ_{\alpha}^{(n)}$ can be represented by placing a copy of $EJ_{\alpha}^{(n-1)}$ on each node of EJ.
For example, Fig. \ref{23rho_2D} shows the network $EJ^{(2)}_{2+3\rho}$ and the edges of the black node $(1-\rho^2, 1+\rho)$ are connected to its neighbores, and the neighbors of node $(0,0)$ are obvious.

\begin{figure}[h]
\centering
\includegraphics[scale=0.50]{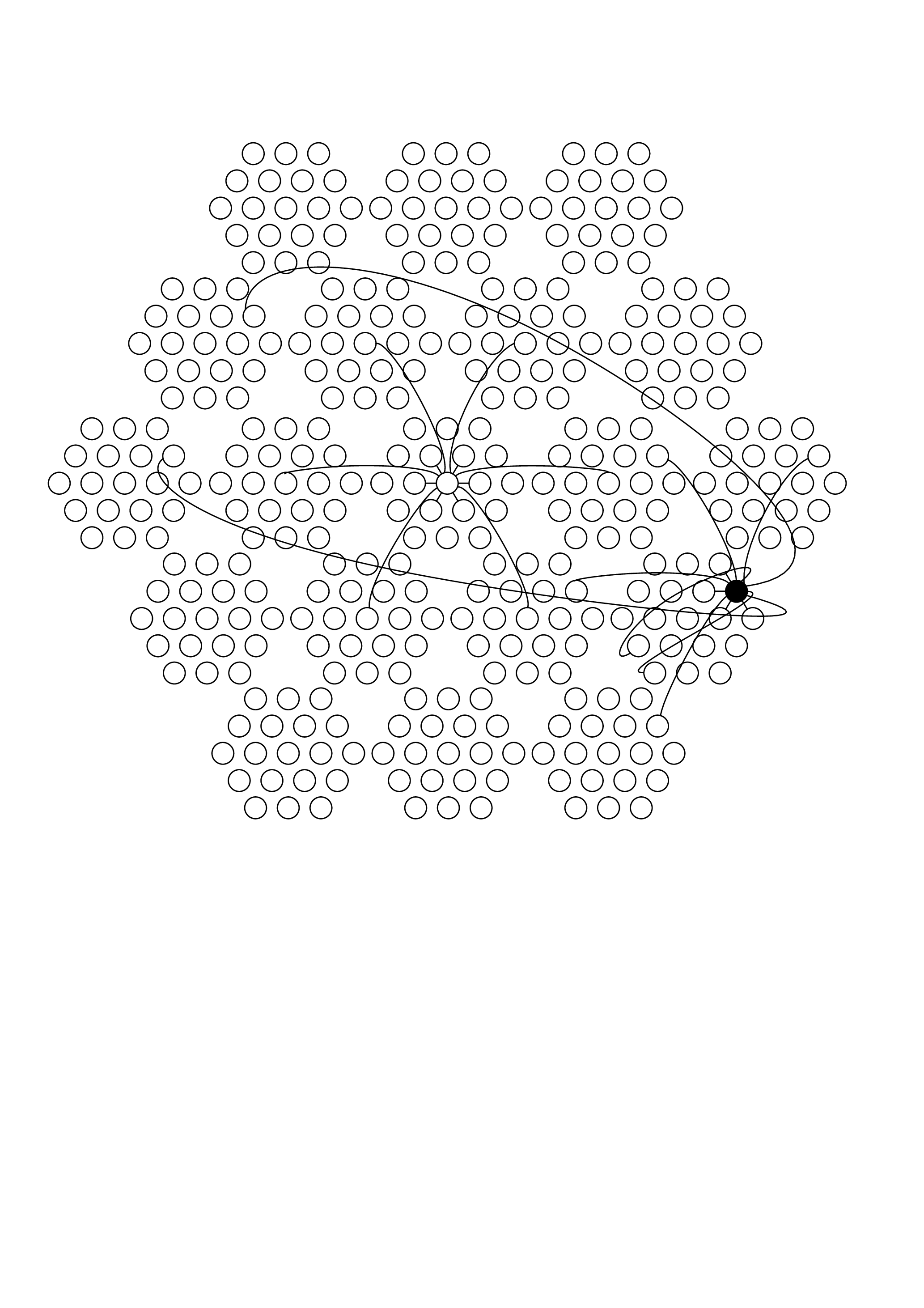}
\caption{$EJ_{2+3\rho}^{(2)}$.}
\label{23rho_2D}
\end{figure}

\subsection{Spanning Trees in $EJ_{\alpha}^{(n)}$\label{sec:STHEJ}}
In this subsection, we explain the construction of the spanning trees in $EJ_{\alpha}^{(n)}$.

%In this subsection, we apply the tree construction methods discussed in Section \ref{sec:EDNIST} and in Section \ref{sec:IST} to obtain 3 edge disjoint node independent
%spanning trees and 6 node independent spanning trees in $EJ_{\alpha}^{(n)}$, respectively.

%The methods are applied recursively from the highest dimension to the lowest dimension of $EJ_{\alpha}^{(n)}$.

In order to obtain the 3 edge disjoint node independent spanning trees, we can recursively apply the tree construction method discussed in Section \ref{sec:EDNIST} on the higher dimensional EJ networks.
That is, the proposed construction method is applied on each dimension (layer) of $EJ_{\alpha}^{(n)}$ (from the highest layer to the lowest layer). For instance, the $EJ_{2+3\rho}^{(2)}$ is composed of two layers. The tree construction method is performed on the first layer, and whenever the node in the first layer has a link then it can recursively apply the tree construction method on the second layer of the network. The same approach can be followed to obtain 6 node independent spanning trees in $EJ_{\alpha}^{(n)}$ by recursively applying the tree construction method discussed in Section \ref{sec:IST}.
The below algorithm describes the tree construction.

\begin{algorithm}[H]
\caption{ConstructSTonHigherEJ($EJ_{\alpha}^{(n)}$)}
\begin{algorithmic}[1]
\label{ConstructSTonHigherEJ}
\FOR{$i = n$ to $1$}
    \STATE Apply construction algorithm in Section \ref{sec:EDNIST} (or \ref{sec:IST}) on $i^{th}$ layer
\ENDFOR
\end{algorithmic}
\end{algorithm}

Figures \ref{2DEDNIST} and \ref{2DIST} illustrate the first edge disjoint node independent spanning trees and the first node independent spanning trees in $EJ_{2+3\rho}^{(2)}$, respectively.
The other spanning trees can be obtained by applying the rotations based on the their corresponding sections.

\begin{figure}[h]
\centering
\includegraphics[scale=0.40]{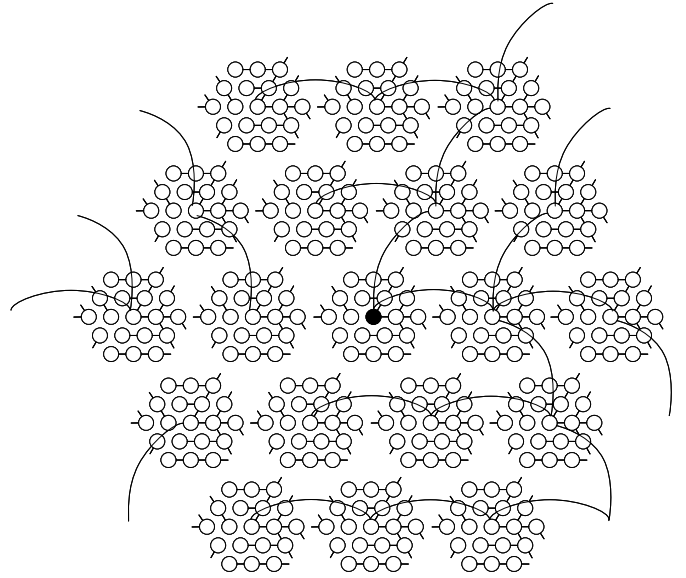}
\caption{First edge disjoint node independent spanning trees in $EJ_{2+3\rho}^{(2)}$.}
\label{2DEDNIST}
\end{figure}

\begin{figure}[h]
\centering
\includegraphics[scale=0.40]{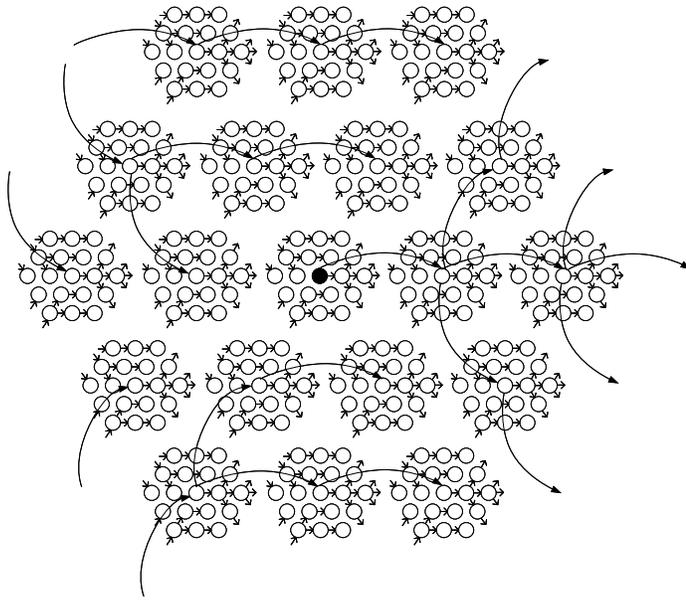}
\caption{First node independent spanning trees in $EJ_{2+3\rho}^{(2)}$.}
\label{2DIST}
\end{figure}

\section{Conclusion\label{sec:conclusion}}
In this paper, we have presented two construction techniques of edge-disjoint node-independent spanning trees (EDNIST) and node-independent spanning trees (IST) in Eisenstein-Jacobi networks.
Because of the network symmetry, in EDNIST, the first tree is constructed and then it is rotated twice to obtain the second and third disjoint trees. Whereas in IST, the first tree is constructed and then it
is rotated five times to get the second, third, forth, fifth, and sixth independent spanning trees. We have shown that the depth of EDNIST is $2k+2$ and the depth of IST is $2k+1$. Additionally, for borth trees, we
have presented a unified routing algorithm for a given network, source node $S$, and a destination node $D$. The complexity of the routing algorithm in the source node is $O(n+k)$ and none for the intermediate nodes.
The communication complexity is equal to the depth of the corresponding tree.

The simulation presented in Section \ref{sec:experimentalResults} supports the Lemmas and Theorems proved in this paper. The simulation shows the average maximum number of steps taken to construct all trees using all ports simultaneously
with no faulty, 1 faulty, 2 faulty, 3 faulty, 4 faulty, and 5 faulty nodes. Further, the maximum of all maximums number of steps to construct all trees using all ports simultaneously is bounded to the upper bound.

%For future work, we would like to continue running our simulation for the EDNIST. In addition, we will further investigate the problem to find parallel constructions for both EDNIST and IST. We will also investigate the problem in higher dimensional Eisenstein-Jacobi networks.
For future work, we will further investigate the problem to find parallel constructions for
both EDNIST and IST. Furthermore, we will also investigate whether there are more than 3 and 6, in respective order, edge disjoint node independent spanning trees and node independent spanning trees in higher dimensional Eisenstein-Jacobi networks.

\bibliographystyle{IEEEtranS}
\bibliography{references}

\end{document}